\tikzset{>=latex}
\begin{document}

%%
%% The "title" command has an optional parameter,
%% allowing the author to define a "short title" to be used in page headers.
\title{Faster Random Walk-based Capacitance Extraction with Generalized Antithetic Sampling}

%%
%% The "author" command and its associated commands are used to define
%% the authors and their affiliations.
%% Of note is the shared affiliation of the first two authors, and the
%% "authornote" and "authornotemark" commands
%% used to denote shared contribution to the research.
\author{Periklis Liaskovitis}
\affiliation{%
	\institution{Synopsys}
	\city{Athens}
	\country{GR}
}
\email{periklis.liaskovitis@synopsys.com}

\author{Marios Visvardis}
\affiliation{%
	\institution{Synopsys}
	\city{Athens}
	\country{GR}
}
\email{marios.visvardis@synopsys.com}

\author{Efthymios Efstathiou}
\affiliation{%
	\institution{Synopsys}
	\city{Athens}
	\country{GR}
}
\email{themis.efstathiou@synopsys.com}

%%
%% By default, the full list of authors will be used in the page
%% headers. Often, this list is too long, and will overlap
%% other information printed in the page headers. This command allows
%% the author to define a more concise list
%% of authors' names for this purpose.
%%\renewcommand{\shortauthors}{Liaskovitis et al.}

%%
%% The abstract is a short summary of the work to be presented in the
%% article.
\begin{abstract}
Floating random walk-based capacitance extraction has emerged in recent years as a tried
and true approach for extracting parasitic capacitance in very large scale integrated circuits.
Being a Monte Carlo method, its performance is dependent on the variance of sampled quantities
and variance reduction methods are crucial for the challenges posed by ever denser process
technologies and layout-dependent effects. In this work, we present a novel, universal
variance reduction method for floating random walk-based capacitance extraction, which is conceptually simple,
highly efficient and provably reduces variance in all extractions, especially when layout-dependent effects
are present. It is complementary to existing mathematical formulations for variance reduction and its
performance gains are experienced on top of theirs. Numerical experiments demonstrate substantial
such gains of up to 50\% in number of walks necessary as well as in actual extraction times
compared to the best previously proposed variance reduction approaches for the floating random-walk.
\end{abstract}

%%
%% The code below is generated by the tool at http://dl.acm.org/ccs.cfm.
%% Please copy and paste the code instead of the example below.
%%
\begin{CCSXML}
<ccs2012>
<concept>
<concept_id>10010405.10010432.10010442</concept_id>
<concept_desc>Applied computing~Mathematics and statistics</concept_desc>
<concept_significance>500</concept_significance>
</concept>
<concept>
<concept_id>10010405.10010432.10010439.10010440</concept_id>
<concept_desc>Applied computing~Computer-aided design</concept_desc>
<concept_significance>500</concept_significance>
</concept>
<concept>
<concept_id>10002950.10003648.10003702</concept_id>
<concept_desc>Mathematics of computing~Nonparametric statistics</concept_desc>
<concept_significance>500</concept_significance>
</concept>
<concept>
<concept_id>10002950.10003714.10003727.10003729</concept_id>
<concept_desc>Mathematics of computing~Partial differential equations</concept_desc>
<concept_significance>300</concept_significance>
</concept>
<concept>
<concept_id>10002950.10003648.10003700.10003701</concept_id>
<concept_desc>Mathematics of computing~Markov processes</concept_desc>
<concept_significance>300</concept_significance>
</concept>
<concept>
<concept_id>10002950.10003705.10003708</concept_id>
<concept_desc>Mathematics of computing~Statistical software</concept_desc>
<concept_significance>300</concept_significance>
</concept>
</ccs2012>
\end{CCSXML}

\ccsdesc[500]{Applied computing~Mathematics and statistics}
\ccsdesc[500]{Applied computing~Computer-aided design}
\ccsdesc[500]{Mathematics of computing~Nonparametric statistics}
\ccsdesc[300]{Mathematics of computing~Partial differential equations}
\ccsdesc[300]{Mathematics of computing~Markov processes}
\ccsdesc[300]{Mathematics of computing~Statistical software}

%%
%% Keywords. The author(s) should pick words that accurately describe
%% the work being presented. Separate the keywords with commas.
\keywords{VLSI, EDA, Capacitance Extraction, Floating Random Walk, Monte Carlo Estimation, Variance Reduction, Antithetics}

%%\received{17 September 2025}

%%
%% This command processes the author and affiliation and title
%% information and builds the first part of the formatted document.
\maketitle

\section{Introduction}
Contemporary VLSI designs pose challenges for parasitic capacitance extraction
due to the sheer number of conductor metals, and dielectric layers involved.
The former can be in the vicinity of hundreds of thousands or even millions,
while the latter can be especially numerous and very thin \cite{Yu2013}.
Advanced technology processes exacerbate design size by imposing additional
complexity, for example in the form of non-stratified and conformal dielectrics
and other Layout-Dependent Effects (LDE), which themselves heavily influence
capacitance extraction \cite{Visvardis2023}.

Although classical field solver methods for the capacitance problem do exist, e.g.,
the Finite Difference (FDM), Finite and Boundary Element Methods (FEM-BEM),
the Floating Random Walk (FRW) Monte Carlo solver, offers very attractive advantages comparatively,
such as: robustness to geometric complexity thanks to lack of meshing,
output locality, i.e., allowing targeted solution on only a subset of domain points,
user-tunable accuracy and inherent scalability.
Thus, it has been steadily gaining popularity in recent years.

The FRW solver involves an application of Monte Carlo integration on a closed surface around the
conductor metal in question \cite{Yu2013, Yu2013a}.
Multiple integrals are estimated concurrently, one for each pairwise capacitance.
For consistent Monte Carlo estimation, i.e., given that the Monte Carlo means converge to the true values
of the constituent integrals, estimation error is only due to variance. For any finite number of samples
this is given by the ratio of the underlying random variable variance over the number of samples.
Thus, for a certain target accuracy, in order to expedite extraction, there exist two avenues:
either obtain the same number of samples faster, or reduce the inherent variance, so that
accuracy is increased with the same number of samples, hence variance reduction.

The specific application of Monte Carlo integration to capacitance extraction amounts to
obtaining samples through random walks starting from the integration surface.
However, the particular structure of the integral to be estimated
poses constraints to exactly what variance reduction methods can be applied, while application itself
is also not straightforward. For example, importance sampling can be applied both to choosing points
on the integration surface, but also to choosing points on the surface of the first transition domain.
The latter can be performed with different importance sampling functions, as showcased by
\cite{Batterywala2005} and \cite{Yu2013a}, potentially leading to huge differences in performance benefits.
The above discussion shows that variance reduction in the FRW solver requires careful consideration.
Ideally, any new variance reduction method should, as much as possible, complement
variance reduction mechanisms previously established in the state-of-the-art,
so as to reap their benefits already as a baseline.

In this work, we propose such a novel variance reduction approach for FRW-based capacitance extraction,
which effectively builds upon previous methods in the literature, but achieves notable further performance gains.
Our proposed approach utilizes sampling multiple points on the surface
of each first transition domain as starting points of random walks, as does \cite{Huang2024},
however, crucially, sampled points are \textit{guaranteed} to have weight values with signs opposite to each other.
The method to accomplish this does not strive to fulfill geometric constraints as previous work has attempted,
but is inherently data-driven, and relies only on the underlying Green's function gradient data.
Contributions of the paper are as follows:
\begin{itemize}
    \item A new algorithm to select starting points for walks on the surface of each first transition domain. The algorithm is impervious to the type of dielectrics (e.g., stratified or not) contained within the domain.
    \item Proof that the algorithm reduces variance of the Monte Carlo capacitance estimator, not just in expectation, but for every instantiation of the underlying stochastic weights, under very mild assumptions.
\end{itemize}

The rest of the paper is organized as follows:
In section \ref{sec:related-work}, prevalent variance reduction approaches for the FRW are discussed.
In section \ref{sec:preliminaries}, a basic overview of current random-walk-based capacitance extraction is given.
In section \ref{sec:gas}, the new algorithm is described and variance analysis is carried out.
Numerical results are presented in section \ref{sec:results}.
Finally, conclusions are drawn and relevant proofs are laid out in the appendix.

\section{Related Work} \label{sec:related-work}
Variance reduction specific to random walk for capacitance extraction has been an extensively studied theme in the literature
in recent years. After preliminary attempts \cite{Batterywala2005}, seminal work \cite{Yu2013, Yu2013a},
showed that important performance improvements could be obtained by:
i) importance sampling on the surface of the first transition domains,
and ii) stratified sampling, regarding the planar faces of the Gaussian surface as strata.
These works considered only simple rectilinear conductors.
Chains of circuit-connected conductors, "nets", complicated things however by requiring net-wide extraction,
for which a \textit{virtual} integration surface was introduced in \cite{Zhang2014}, the Virtual Gaussian Surface (VGS),
in place of the Block Gaussian Surface of \cite{Yu2013a}.

More recently, authors in \cite{Huang2024} present an approach that enables random walks to share
first transition domains, i.e., multiple walks can emanate from a single first transition domain.
The very idea of shooting multiple walks from a single transition domain is a central concept in itself
and is discussed extensively in \cite{Huang2024}, reaching the important conclusion that it is multiple walks shot from the
\textit{first} and not so much the subsequent transition domains that contributes substantially to variance reduction.
A theoretical variance reduction result is proved, however, the proof requires first transition
domains at best with stratified dielectrics and at worst with a single homogeneous dielectric,
assumptions, which are not very realistic in practice, \textit{especially} in the presence of LDE
\cite{Zhang2016, Visvardis2023}. A specific "exploration" strategy also has to be employed
to determine whether application of the proposed method in any given design will be beneficial or not, since,
as shown by their experiments, performance improvements are not always guaranteed with respect to current state-of-the-art.

This approach is the most conceptually similar to ours in the sense that we also propose multiple walks
from a single first transition domain. Nonetheless, the way we choose and manipulate starting points of walks on the surface
of the first transition domain is quite distinct, both as mathematical formulation and as practical application.
The method of \cite{Huang2024} chooses points on the first transition domain symmetrically
across the boundary of the Gaussian surface, changing the probability density function depending on the number of samples,
whereas we choose based on the sign of the weight values corresponding to the points, without requiring geometric symmetry
and keeping the probability density function the same for all points of the same transition domain.
This is a crucial difference, enabling our approach to offer practical variance reduction guarantees
for all designs, regardless of the arrangement of dielectrics contained within the first transition domains,
something that eludes \cite{Huang2024}. Additionally, because our method is universally better compared
to current state-of-the-art, there is no need to decide upon its usage or not in any given extraction.

\section{Preliminaries} \label{sec:preliminaries}
The capacitance extraction problem amounts to calculating all pair-wise capacitances between conductor
i, and all other conductors j, all kept at known electric potentials (the mathematical formulation of the problem
is equivalent either with conductor i at unit potential and all j, $j\neq{i}$ at zero potentials, or vice versa).
With fixed potentials, capacitance is directly related to charge induced on the surface, e.g., of conductor i
and according to Gauss's law this is given by the integral:

\begin{equation} \label{eq:1}
    Q = -\oint_G D(\mathbf{r}) \,d\mathbf{r} = -\oint_G \varepsilon(\mathbf{r})\nabla\phi(\mathbf{r}) \cdot \vv{\mathbf{n}}(\mathbf{r})\,d\mathbf{r}
\end{equation}

\noindent where $G$ is a closed integration surface around conductor $i$, and encompasses only conductor $i$,
$\varepsilon(\mathbf{r})$ is the dielectric permittivity function and $\phi(\mathbf{r})$ is the electric potential,
both defined at point $\mathbf{r}$ on the integration surface. This is an electrostatic setup governed by a differential equation of elliptic type, for which ample details can be found in
the literature \cite{Lecoz1992, Yu2013}. The random walk method employs Monte Carlo integration to compute the above integral,
effectively estimating its value via an ensemble average of samples of the integrand at many different points $\mathbf{r}$.

\begin{figure}[h!]
    \centering
    \def\scalefactor{0.6}
    \begin{tikzpicture}
    % scalefactor variable must be already set
    \def\s{\scalefactor}
    % gaussian
    \filldraw[dotted, ultra thick, draw=red, fill=black!5] (-1*\s, -1*\s) rectangle (7*\s, 3*\s);
    % annotate gaussian
    \draw[->, semithick] (-0.5*\s, -2.2*\s) to[out=160, in=-80] (-0.8*\s, -1*\s);
    \node[anchor=west] at (-0.5*\s, -2.2*\s) {\small Gaussian surface enclosing conductor i};

    % conductors
    \filldraw[fill=black!15, draw=black] (0*\s, 0*\s) rectangle (6*\s, 2*\s);
    \filldraw[fill=black!15, draw=black] (5*\s, 6*\s) rectangle (11*\s, 8*\s);
    \node at (3*\s, 1*\s) {\small Conductor $i$};
    \node at (8*\s, 7*\s) {\small Conductor $j$};

    % first node
    \draw[line width=0.5mm] (2*\s, 3*\s) circle[radius=2pt];
    \draw[line width=0.5mm] (7*\s, 0.5*\s) circle[radius=2pt];

    % draw transition domains
    \foreach \rcx/\rcy/\rtx/\rty/\ra/\color in {2/3/2.7/4/1/blue!40!green,
                                           2.7/4/4.7/4.5/2/blue!40!green,
                                           4.7/4.5/5.5/6/1.5/blue!40!green,
                                           7/0.5/8/0.2/1/red!40!blue,
                                           8/0.2/9/2.2/2/red!40!blue,
                                           9/2.2/8.5/5.2/3/red!40!blue,
                                           8.5/5.2/8.7/6/0.8/red!40!blue} {
        \def\cx{\rcx*\s}
        \def\cy{\rcy*\s}
        \def\tx{\rtx*\s}
        \def\ty{\rty*\s}
        \def\a{\ra*\s}
        %\fill[fill=\color!5, opacity=0.3] (\cx - \a, \cy - \a) rectangle (\cx + \a, \cy + \a);
        \draw[draw=\color, line width=0.2mm] (\cx - \a, \cy - \a) rectangle (\cx + \a, \cy + \a);
        \draw[->, line width=0.4mm, draw=\color] (\cx, \cy) to (\tx, \ty);
        \fill (\tx, \ty) circle[radius=2pt];
    }

    % annotate the first domain
    \draw[->, semithick] (0*\s, 6.2*\s) to[out=-90, in=180] (1*\s, 3.7*\s);
    \node[anchor=west] at (-1.6*\s, 6.5*\s) {\small First transition domain};
\end{tikzpicture}
    \caption{Floating random walk example. Two walks are launched from the integration surface around conductor $i$ and land on the surface of conductor $j$.}
    \label{fig:rw-typical-example}
\end{figure}

The integrand requires the value of the electrostatic potential, which is, however, unknown exactly on the surface $G$.
The only sets of points with known potentials are the conductor surfaces. The method takes advantage of a central property
of the associated elliptic operator, which is the existence of a propagation (or, equivalently, transition or generalized Poisson) kernel.
This is also known as \textit{surface Green's function} $P(\cdot)$ in the literature and enables the potential at any point $\mathbf{r}$
to be expressed as a weighted average of the potentials at points forming a closed surface enclosing $\mathbf{r}$:

\begin{equation} \label{eq:2}
    \phi(\mathbf{r}) = \oint_{S_1} P(\mathbf{r},\mathbf{r_1}) \phi(\mathbf{r_1}) \,d\mathbf{r_1}
\end{equation}

\noindent where $\mathbf{r_1}$ represents the position vector of a point on the enclosing surface, and is integrated over.

For capacitance extraction the most fitting such enclosure is a cube $S_1$ centered at $r$, which leads to the notion
of a transition cube, also referred to as transition domain \cite{Yu2013a}. This means that an unknown potential at a point $\mathbf{r}$
of $G$ can be expressed in terms of potentials at further away points $\mathbf{r_1}$ lying on the surface of transition domain $S_1$ centered
at $\mathbf{r}$. And, unknown potentials at $\mathbf{r_1}$ can, in turn, be expressed in terms of potentials at even further away points $\mathbf{r_2}$
on the surface of transition domain $S_2$ centered at $\mathbf{r_1}$.
Each of these integrals can be estimated in practice by a single sample estimator, i.e., utilizing a single sample on the surface of each respective transition domain
according to the surface Green's function.
This procedure of sampling on the surface of each consecutive transition domain can be repeated recursively, for as many steps $n$ as necessary,
until the point $\mathbf{r_n}$ lies on the surface of a conductor.
This recursion, effectively a Markov chain, forms the essence of the FRW method.
Substituting (\ref{eq:2}) into (\ref{eq:1}), for example once, one can obtain a two-step recursion as follows:

\begin{equation} \label{eq:3}
    Q = -\oint_G \varepsilon(\mathbf{r}) \oint_{S_1} \nabla_\mathbf{r} P(\mathbf{r},\mathbf{r_1}) \cdot \vv{\mathbf{n}}(\mathbf{r}) \oint_{S_2} P(\mathbf{r_1},\mathbf{r_2})\,d\mathbf{r_2}d\mathbf{r_1}d\mathbf{r}
\end{equation}

An illustration of how the method typically works is shown in Fig. \ref{fig:rw-typical-example}. Again, as in (\ref{eq:1}), (\ref{eq:3})
is an integral tackled by Monte Carlo integration, specifically by sampling (Markov) chains of transition domains starting on $G$,
creating transition domains $S_1$, ... , $S_n$ and marking what conductor the final transition domains end up at. Note that in (\ref{eq:3})
the inner integral over the first transition domain $S_1$ is quite distinct from integrals over all
subsequent transition domains $S_2\,...\,S_n$, in that it contains the gradient of the surface Green's function and therefore requires special treatment.
Variance reduction methods have been proposed to accelerate handling of the first transition domain and are primarily described in \cite{Yu2013, Yu2013a}.
In summary, importance sampling is applied on both $G$ and $S_1$, and the final form of the integral is:

\begin{equation} \label{eq:4}
    Q = \oint_G \frac{\varepsilon(\mathbf{r})}{F} \oint_{S_1} F q(\mathbf{r},\mathbf{r_1}) w(\mathbf{r},\mathbf{r_1}) \phi(\mathbf{r_1}) d\mathbf{r_1}d\mathbf{r}
\end{equation}

\noindent where the point $\mathbf{r_1}$ is sampled according to $q(\mathbf{r},\mathbf{r_1})$ and a weight value $w(\mathbf{r},\mathbf{r_1})$ is assigned to it:

\begin{equation} \label{eq:5}
    \begin{split}
    q(\mathbf{r},\mathbf{r_1}) &= \frac{\lvert \nabla_\mathbf{r} P(\mathbf{r},\mathbf{r_1}) \cdot \vv{\mathbf{n}}(\mathbf{r}) \rvert}{K(\mathbf{r})} \\
    w(\mathbf{r},\mathbf{r_1}) &=-\frac{K(\mathbf{r})}{L} \frac{\nabla_\mathbf{r} P(\mathbf{r},\mathbf{r_1}) \cdot \vv{\mathbf{n}}(\mathbf{r})}{\lvert \nabla_\mathbf{r} P(\mathbf{r},\mathbf{r_1}) \cdot \vv{\mathbf{n}}(\mathbf{r}) \rvert}
    \end{split}
\end{equation}

\noindent where $L$ is the side length of the cubic transition domain, $F=\oint_G \varepsilon(\mathbf{r}) d\mathbf{r}$, $K(\mathbf{r})=\oint_{S_1} \lvert \nabla_\mathbf{r} P(\mathbf{r},\mathbf{r_1}) \cdot \vv{\mathbf{n}}(\mathbf{r}) \rvert d\mathbf{r_1}$.
What is particularly important for the discussion that follows is that, given the above state-of-the-art formulation, for any given first transition domain $S_1$ there are only two possible weight values that can be associated with it, namely:

\begin{equation} \label{eq:6}
    w(\mathbf{r},\mathbf{r_1})=
\begin{cases}
    -\frac{K(\mathbf{r})}{L},& \text{if } \nabla_\mathbf{r} P(\mathbf{r},\mathbf{r_1}) \cdot \vv{\mathbf{n}}(\mathbf{r}) > 0 \\
    \frac{K(\mathbf{r})}{L},& \text{if } \nabla_\mathbf{r} P(\mathbf{r},\mathbf{r_1}) \cdot \vv{\mathbf{n}}(\mathbf{r}) < 0
\end{cases}
\end{equation}

In summary, for any given first transition domain, a point is sampled on its surface according to (\ref{eq:5}).
Furthermore, according to (\ref{eq:6}), this point corresponds to a weight value with fixed magnitude $\frac{K(\mathbf{r})}{L}$.
Only the sign of the weight value can vary, provided the transition domain is kept fixed.
The sign of the weight value is determined by the exact position of the point on the surface of the domain according to (\ref{eq:6}).
The random walk method estimates the value of capacitance by averaging many (i.e., hundreds of thousands or even millions)
weights of the form (\ref{eq:6}) corresponding to different transition domains $S_1^{(i)}$, all formed with their centers on the integration surface $G$.

\section{Generalized Antithetic Sampling} \label{sec:gas}

\subsection{Motivation and High-level idea}

It has been previously established \cite{Huang2024} that it is beneficial for variance reduction of the FRW method to sample not just one,
but multiple points on the surface of each first transition domain and simultaneously launch random walks
from all of them. We observe that this is conceptually very related to the idea of antithetic random variables in the literature of Monte Carlo integration \cite{Owen2013, Robert2010}.
An antithetic variable (equivalently, sample) strives to be such that the integrand takes on a somehow "opposite" value compared to that
of the original sample, i.e., it takes on a high value when the original sample takes on a low value and vice-versa.

The most counterintuitive aspect of antithetic sampling is that samples involved in the Monte Carlo estimator of an integral
can no longer be viewed individually. They can only be viewed in pairs:
instead of $2n$ independent samples, $n$ independent \textit{pairs} of samples are now used.
There is no restriction on how to obtain the two samples of an antithetic pair, as long as each of them follows the same \textit{marginal} distribution.
The joint distribution of the antithetic samples within a pair may be arbitrary.
The antithetics Monte Carlo estimator is then based on expectation \textit{against the common marginal distribution},
applied to the pairs as independent units and \textit{not against the joint distribution of the pair}.
The above is an important and largely missed result, generally referred to as \textit{generalized antithetic variables} \cite{Hammersley1956, Owen2013}.

For example, for a given integrand $f(\cdot)$ when $D$ is the unit square $[0,1]^2$ equipped with the uniform distribution, the initial sample is $\mathbf{x}$, obtained from the whole support $D$,
and the antithetic sample is $\tilde{\mathbf{x}}=1-\mathbf{x}$ component-wise, so that for the marginal pdfs we have $p_{marg}(\mathbf{x})=p_{marg}(\mathbf{\tilde{x}})$.
The joint distribution of the initial and antithetic sample could be different from the original uniform distribution,
nonetheless the estimator can be easily seen to be unbiased with respect to the common marginal \cite{Owen2013}:

\noindent\begin{minipage}{.5\textwidth}
    \begin{equation} \label{eq:theory}
        \begin{split}
            \hat{\mu}_{anti} &= \frac{1}{2n}\sum\limits_{i=1}^{n}(f(\mathbf{x}) + f(\tilde{\mathbf{x}})) \\
            E[\hat{\mu}_{anti}] &= \frac{1}{2n}\sum\limits_{i=1}^{n}2E[f(\mathbf{x})] = E[f(\mathbf{x})] \\
                                &= \int_{D}^{} p_{marg}(\mathbf{x})f(\mathbf{x})d\mathbf{x}
        \end{split}
    \end{equation}
\end{minipage}
\hfill
\begin{minipage}{.4\textwidth}
    \includegraphics[width=\linewidth]{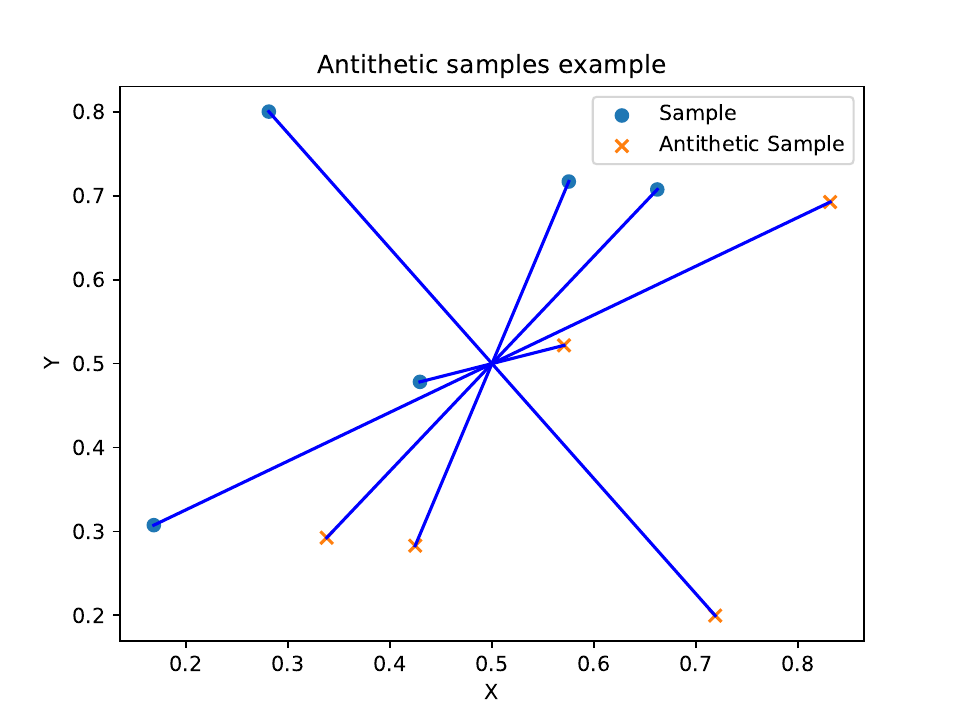}
    \captionof{figure}{Uniform antithetic samples on $[0,1]^2$}
\end{minipage}

The samples within each antithetic pair will not be independent,
however, for second-moment effects such as variance, only their correlation is of consequence.
In fact, optimal variance reduction is achieved when integrand values are \textit{maximally} negatively correlated \cite{Owen2013, Robert2010}.
Our main motivation in this work is that maximal variance reduction through antithetic variables
is not explicitly sought after or achieved in \cite{Huang2024}.
According to theory, maximal negative correlation of the weight values is necessary for optimality, and this amounts to every single weight value
obtained from a given first transition domain be paired with a weight value of opposite sign.
This is not at all guaranteed with the scheme of \cite{Huang2024}.

We describe a \textit{novel} method to obtain sample points on the surface of a given first transition domain, to obtain maximal negative correlation.
The first challenge towards this is to address the case of the relevant marginal distribution, i.e., the surface Green's function not being symmetric,
which is often the case in practice, e.g., in the presence of non-stratified dielectrics.
The relevant pdf in the integral is $q(\mathbf{r},\mathbf{r_1})$, its support set is the whole surface of the cubic transition domain
and the weight value $w(\mathbf{r},\mathbf{r_1})$ corresponding to each sample is the relevant integrand from (\ref{eq:4}).
To apply generalized antithetic sampling, we need both samples to be marginally distributed according to this exact same pdf.
Crucially, we want to deliberately arrange the samples, so that the second sample always corresponds to a weight value with sign opposite to
that of the first sample. By virtue of (\ref{eq:6}),
the magnitude of both of these weight values will be the same for any given first transition domain.

The pair of points on the surface of a given first transition domain with weight values of opposite signs is hereafter
referred to as an \textit{antithetic pair}.
Then, a way to achieve a single antithetic pair, while keeping the marginal pdf the same, is through sequential independent sampling from this pdf.
In other words, the same pdf is utilized repeatedly until the desired weight values of opposite signs
are first collected. An outline of the exact algorithm to achieve this is showcased as Algorithm \ref{alg:1}.

\begin{algorithm}[H]
    \caption{Algorithm to obtain $N$ generalized antithetic pairs of points on a given first transition domain} \label{alg:1}
    \begin{flushleft}
        \textbf{INPUT}: target number of antithetic pairs N, a transition domain and its associated $q(\mathbf{r},\mathbf{r_1}),K(\mathbf{r}), L$ \newline
        \textbf{OUTPUT}: $2N$ points in space to launch walks from
    \end{flushleft}
    \begin{algorithmic}[1]
        \FOR{1 up to N}
            \STATE Sample a point $Q$ on the surface of the first transition domain with $q(\mathbf{r},\mathbf{r_1})$;
            \STATE Store the point $Q$ and the corresponding weight value, according to (\ref{eq:6}), $w$;
            \REPEAT
                \STATE Sample a second point $\tilde{Q}$ on the surface of the first transition domain with $q(\mathbf{r},\mathbf{r_1})$;
                \STATE Store the point $\tilde{Q}$ and the corresponding weight value, according to (\ref{eq:6}), $\tilde{w}$;
            \UNTIL{$w \cdot \tilde{w} < 0$}
            \STATE Output the points \{$Q,\tilde{Q}$\} as a single antithetic pair;
        \ENDFOR
    \end{algorithmic}
\end{algorithm}

\noindent Algorithm \ref{alg:1} describes a discrete random process.
A single trial in the process is sampling a point on the surface of the first transition domain.
All trials in the process are obviously independent to each other and identically distributed (IID).
However, it is not immediately obvious why the marginal distributions of the initial and antithetic samples
actually coincide, and this is the subject of the next section.

\subsection{Algorithm Analysis}

A crucial property of the $m-th$ first transition domain $S_m$, with center $\mathbf{r_m}$ and Gaussian surface normal $\vv{\mathbf{n}}(\mathbf{r_m})$,
is the probability $p_m$ of obtaining a negative weight value by sampling on its surface, defined according to:

\begin{equation}
    \begin{split}
        S_m^{+} &= \{\mathbf{r_1^{(m)}}: \nabla_\mathbf{r} P(\mathbf{r_m},\mathbf{r_1^{(m)})} \cdot \vv{\mathbf{n}}(\mathbf{r_m}) > 0\} \\
        S_m^{-} &= \{\mathbf{r_1^{(m)}}: \nabla_\mathbf{r} P(\mathbf{r_m},\mathbf{r_1^{(m)})} \cdot \vv{\mathbf{n}}(\mathbf{r_m}) < 0\} \\
        p_m     &= \oint_{S_m^{+}} q(\mathbf{r_m},\mathbf{r_1^{(m)}}) d\mathbf{r_1} = \oint_{S_m^{+}} \frac{\nabla_\mathbf{r} P(\mathbf{r_m},\mathbf{r_1^{(m)}}) \cdot \vv{\mathbf{n}}(\mathbf{r_m})}{K(\mathbf{r_m})} d\mathbf{r_1} \label{eq:12}
    \end{split}
\end{equation}

\noindent where the pdf for sampling on the $m-th$ first transition domain $q(\mathbf{r_m},\mathbf{r_1^{(m)}})$ is given analytically by (\ref{eq:5}).
At first glance, the value of $p_m$ can seem to be different per first transition domain,
because the center of each such domain is stochastically chosen and may contain a slightly different configuration of dielectrics.
A crucial result, necessary for the analysis that follows, is that it is in fact \textit{constant}, regardless of transition domain:

\begin{lemma} \label{prop:l1}
When sampling on the surface of an arbitrary first transition domain $S_m$ with $q(\mathbf{r_m},\mathbf{r_1^{(m)}})$ it holds that $p_m = 1 - p_m = p = 0.5$.
\end{lemma}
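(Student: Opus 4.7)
The plan is to leverage the fact that the propagation kernel $P(\mathbf{r},\mathbf{r_1})$, restricted to the surface of the fixed cube $S_m$, is the harmonic measure of $S_m$ viewed from the interior point $\mathbf{r}$. In particular, the mean-value representation~(\ref{eq:2}) is valid not only when $\mathbf{r}$ coincides with the cube center but for any $\mathbf{r}$ strictly inside $S_m$, since it is the Poisson-type solution operator of the underlying Dirichlet problem, independently of the internal dielectric configuration.

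Applying this representation to the constant harmonic function $\phi\equiv 1$ yields the normalization $\oint_{S_m} P(\mathbf{r},\mathbf{r_1})\,d\mathbf{r_1}=1$ for every interior $\mathbf{r}$. Since the right-hand side does not depend on $\mathbf{r}$, I would differentiate both sides with respect to $\mathbf{r}$ \emph{with the surface $S_m$ held fixed}, which produces the vector identity $\oint_{S_m}\nabla_\mathbf{r} P(\mathbf{r},\mathbf{r_1})\,d\mathbf{r_1}=\mathbf{0}$. Specializing to $\mathbf{r}=\mathbf{r_m}$ and taking the inner product with $\vv{\mathbf{n}}(\mathbf{r_m})$ gives $\oint_{S_m}\nabla_\mathbf{r} P(\mathbf{r_m},\mathbf{r_1})\cdot\vv{\mathbf{n}}(\mathbf{r_m})\,d\mathbf{r_1}=0$. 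I would then split the surface as $S_m=S_m^{+}\cup S_m^{-}$, on which the integrand has constant positive and negative sign respectively, and divide through by $K(\mathbf{r_m})$; using~(\ref{eq:12}) to identify the positive-side integral with $p_m$ and the negative-side integral with $-(1-p_m)$ directly yields $p_m-(1-p_m)=0$, hence $p_m=1/2$.

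The main obstacle is justifying the differentiation step, and, underlying it, the extension of~(\ref{eq:2}) to arbitrary interior $\mathbf{r}$ rather than only the cube center. Both are fairly standard from the elliptic-PDE side --- smoothness of the harmonic measure with respect to the pole, together with uniqueness of the Dirichlet solution on a bounded Lipschitz domain carrying only piecewise-regular dielectric coefficients --- but they are the crucial ingredients that free the argument from any assumption on the arrangement of dielectrics inside $S_m$, and thereby deliver the ``constant regardless of transition domain'' conclusion on which Algorithm~\ref{alg:1} depends.
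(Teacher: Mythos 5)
Your proposal is correct and follows essentially the same route as the paper, which invokes the fact that $P(\mathbf{r},\mathbf{r_1})$ integrates to $1$ for every $\mathbf{r}$ so that $\oint_{S_m}\nabla_\mathbf{r}P(\mathbf{r_m},\mathbf{r_1})\cdot\vv{\mathbf{n}}(\mathbf{r_m})\,d\mathbf{r_1}=0$, forcing $K^{+}(\mathbf{r_m})=-K^{-}(\mathbf{r_m})$ and hence $p_m=K^{+}/(2K^{+})=1/2$. You merely make explicit the differentiation of the normalization identity and the regularity needed to justify it, which the paper delegates to the cited appendix of Yu et al.
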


\noindent The above is a known result from the literature of FRW, found in the appendix of \cite{Yu2013}.
It is based on the fact that the integrals over the sets $S_m^{+}$ and $S_m^{-}$ are of equal magnitudes and opposite signs,
by virtue of the surface Green's function being a proper probability density function at every point in space $\mathbf{r_m}$:

\begin{equation}
    \begin{split}
        K^{+}(\mathbf{r_m}) &= \oint_{S_m^{+}} \nabla_\mathbf{r} P(\mathbf{r_m},\mathbf{r_1^{(m)}}) \cdot \vv{\mathbf{n}}(\mathbf{r_m}) \,d\mathbf{r_1} \\
        K^{-}(\mathbf{r_m}) &= \oint_{S_m^{-}} \nabla_\mathbf{r} P(\mathbf{r_m},\mathbf{r_1^{(m)}}) \cdot \vv{\mathbf{n}}(\mathbf{r_m}) \,d\mathbf{r_1} \\
        K(\mathbf{r_m}) &= K^{+}(\mathbf{r_m}) + K^{-}(\mathbf{r_m}) = 2 \cdot K^{+}(\mathbf{r_m}) \implies p_m = \frac{K^{+}(\mathbf{r_m})}{2 \cdot K^{+}(\mathbf{r_m})} = \frac{1}{2}
    \end{split}
\end{equation}

We can now prove that Algorithm \ref{alg:1} fulfills the same marginals prerequisite of generalized antithetic sampling and can
lead to an unbiased estimator based on (\ref{eq:theory}) in the next section:

\begin{lemma} \label{prop:l2}
The initial and antithetic samples obtained by Algorithm \ref{alg:1} have the same marginal distribution.
\end{lemma}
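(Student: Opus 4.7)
The plan is to condition on the sign of the initial sample $Q$, compute the conditional distribution of the antithetic sample $\tilde Q$ in each case, then combine via the law of total probability, using Lemma \ref{prop:l1} to handle the normalizing constants.

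First, I would fix a first transition domain $S_m$ and recall the two disjoint sets $S_m^+$ and $S_m^-$ whose union (up to a measure-zero set) is the whole surface of $S_m$, and note that by Lemma \ref{prop:l1} each of them carries $q$-probability exactly $1/2$. The marginal distribution of the initial sample $Q$ is, by construction in line 2 of Algorithm \ref{alg:1}, the density $q(\mathbf{r_m},\cdot)$ over the surface of $S_m$.

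Next, I would analyze the antithetic sample $\tilde Q$. The repeat-until loop draws i.i.d.\ samples from $q(\mathbf{r_m},\cdot)$ and stops at the first one whose weight has opposite sign to $w$. By a standard argument about rejection sampling from an i.i.d.\ sequence, conditional on the event that $\tilde Q$ lies in a prescribed measurable set $E$ of positive $q$-mass, the distribution of $\tilde Q$ equals $q$ restricted to $E$ and normalized. Concretely, conditioning on the case $Q\in S_m^-$ (so $w>0$), the loop terminates on the first draw that falls in $S_m^+$, hence
\begin{equation*}
p_{\tilde Q\mid Q\in S_m^-}(\mathbf{r_1}) \;=\; \frac{q(\mathbf{r_m},\mathbf{r_1})\,\mathbf{1}_{S_m^+}(\mathbf{r_1})}{\oint_{S_m^+} q(\mathbf{r_m},\mathbf{r_1'})\,d\mathbf{r_1'}} \;=\; 2\,q(\mathbf{r_m},\mathbf{r_1})\,\mathbf{1}_{S_m^+}(\mathbf{r_1}),
\end{equation*}
where the denominator equals $1/2$ by Lemma \ref{prop:l1}. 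Symmetrically, $p_{\tilde Q\mid Q\in S_m^+}(\mathbf{r_1}) = 2\,q(\mathbf{r_m},\mathbf{r_1})\,\mathbf{1}_{S_m^-}(\mathbf{r_1})$.

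Finally, I would apply the law of total probability with $P(Q\in S_m^+)=P(Q\in S_m^-)=1/2$:
\begin{equation*}
p_{\tilde Q}(\mathbf{r_1}) \;=\; \tfrac{1}{2}\cdot 2q(\mathbf{r_m},\mathbf{r_1})\mathbf{1}_{S_m^+}(\mathbf{r_1}) \;+\; \tfrac{1}{2}\cdot 2q(\mathbf{r_m},\mathbf{r_1})\mathbf{1}_{S_m^-}(\mathbf{r_1}) \;=\; q(\mathbf{r_m},\mathbf{r_1}),
\end{equation*}
since $S_m^+\cup S_m^-$ covers the surface of $S_m$ up to a null set. This matches the marginal of $Q$ and concludes the argument. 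The only non-routine ingredient is the restriction-normalization step for $\tilde Q$; the mild subtlety there is that one must justify that the stopping time is almost surely finite and that the first accepted draw, conditional on the accepting set, is distributed as the normalized restriction of $q$ to that set. Both follow from the i.i.d.\ assumption together with $P(\tilde Q\in S_m^\pm)=1/2>0$ in Lemma \ref{prop:l1}, so this is the main (though still quite standard) obstacle in making the argument rigorous.
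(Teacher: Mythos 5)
Your proposal is correct and follows essentially the same route as the paper's proof: condition on which of $S_m^+$, $S_m^-$ the initial sample falls in, observe that the antithetic sample is then distributed as $q$ restricted to the opposite set and normalized, and marginalize via the law of total probability, with Lemma \ref{prop:l1} supplying the $1/2$ normalizations. The only cosmetic difference is that the paper carries general $p_m$ through the computation and substitutes $p_m = 1/2$ at the end, whereas you invoke Lemma \ref{prop:l1} up front; your explicit remark that the rejection-sampling step (a.s.\ finite stopping time, first accepted draw distributed as the normalized restriction) needs justification is a point the paper passes over more quickly.
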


\begin{proof}

Consider the random variables $\mathbf{Q_m}$ and $\mathbf{\tilde{Q}_m}$ on the surface of the transition domain,
corresponding to the initial and antithetic samples in Algorithm \ref{alg:1} respectively.
The marginal probability density of the initial variable $\mathbf{Q_m}$ , $q_{init}(\mathbf{r_m},\mathbf{r_1^{(m)}})$, is obviously equal to $q(\mathbf{r_m},\mathbf{r_1^{(m)}})$.
On the other hand, $\mathbf{\tilde{Q}_m}$ can be regarded as being jointly distributed with a binary random variable $Z_m$,
taking on two distinct values:

\begin{align*}
    z_m^{+} &:= \mathbf{Q_m} \in S_m^{+} & Prob(Z_m = z_m^{+}) &= p_m \\
    z_m^{-} &:= \mathbf{Q_m} \in S_m^{-} & Prob(Z_m = z_m^{-}) &= 1 - p_m
\end{align*}

\noindent Consider the probability measure induced by $q(\mathbf{r_m},\mathbf{r_1^{(m)}})$ as $\mu_{q_\mathbf{r_m}}(\cdot)$
(note that $Z_m$ induces a completely different measure).
If the value of $Z_m$ is known and fixed, e.g., $Z_m = z_m^{+}$, then, by Algorithm \ref{alg:1}, $\mathbf{\tilde{Q}_m} \in S_m^{-}$
and $\mathbf{\tilde{Q}_m}$ is distributed according only to $\mu_{q_\mathbf{r_m}}(\cdot)$.
Specifically, the probability of the event that $\mathbf{\tilde{Q}_m}$ lies within a neighborhood $B_\mathbf{r_1^{(m)}}$ of point $\mathbf{r_1^{(m)}}$,
assuming an appropriate notion of neighborhood and limit thereof on the surface of the transition domain,
will be the ratio of the $\mu_{q_\mathbf{r_m}}(\cdot)$ measures of $B_\mathbf{r_1^{(m)}}$ and the whole of $S_m^{-}$,
since $B_\mathbf{r_1^{(m)}}$ is then a subset of $S_m^{-}$:

\begin{equation}
    \begin{split}
        Prob(\mathbf{\tilde{Q}_m} \in B_\mathbf{r_1^{(m)}} | Z_m = z_m^{+}) &= Prob(\mathbf{\tilde{Q}_m} \in B_\mathbf{r_1^{(m)}} | \mathbf{\tilde{Q}_m} \in S_m^{-}) =
        \frac{\mu_{q_\mathbf{r_m}}B_\mathbf{r_1^{(m)}}}{\mu_{q_\mathbf{r_m}}(S_m^{-})}
        \implies \text{($\lim{B_\mathbf{r_1^{(m)}} \to 0}$)} \\
        \tilde{q}(\mathbf{r_m},\mathbf{r_1^{(m)}} | Z_m = z_m^{+}) &=
        \begin{cases}
           0 & \text{, } \mathbf{r_1^{(m)}} \in S_m^{+} \\
            \frac{q(\mathbf{r_m}, \mathbf{r_1^{(m)}})}{1 - p_m} & \text{, } \mathbf{r_1^{(m)}} \in S_m^{-} \\
        \end{cases} = \frac{q(\mathbf{r_m}, \mathbf{r_1^{(m)}})}{1 - p_m} \cdot \mathbf {1} _{S_m^{-}}(\mathbf{r_1^{(m)}}) \label{eq:14}
    \end{split}
\end{equation}

\noindent where $\tilde{q}(\mathbf{r_m},\mathbf{r_1^{(m)}} | Z_m = z_m^{+})$ is the corresponding conditional probability density function
and ${1} _{S_m^{-}}$ is the indicator function for the set $S_m^{-}$ on the surface of the transition domain.
It is easy to see that $\tilde{q}(\mathbf{r_m},\mathbf{r_1^{(m)}} | Z_m = z_m^{+})$ is properly normalized on its respective support set.
Similarly for conditioning upon $Z_m = z_m^{-}$:

\begin{equation}
    \begin{split}
        \tilde{q}(\mathbf{r_m},\mathbf{r_1^{(m)}} | Z_m = z_m^{-}) &=
        \begin{cases}
            \frac{q(\mathbf{r_m}, \mathbf{r_1^{(m)}})}{p_m} & \text{, } \mathbf{r_1^{(m)}} \in S_m^{+} \\
            0 & \text{, } \mathbf{r_1^{(m)}} \in S_m^{-}
        \end{cases} = \frac{q(\mathbf{r_m}, \mathbf{r_1^{(m)}})}{p_m} \cdot \mathbf {1} _{S_m^{+}}(\mathbf{r_1^{(m)}}) \label{eq:15}
    \end{split}
\end{equation}

\noindent Now we are in a position to marginalize the joint probability distribution of $(\mathbf{\tilde{Q}_m}, Z_m)$ over $Z_m$
utilizing the definition of conditional probability:

\begin{equation}
    \begin{split}
        Prob(\mathbf{\tilde{Q}_m} \in B_\mathbf{r_1^{(m)}}) &= \sum_{z_m=\{z_m^{-},z_m^{+}\}}Prob(\mathbf{\tilde{Q}_m} \in B_\mathbf{r_1^{(m)}}, Z_m = z_m) \\
                                                            &= \sum_{z_m=\{z_m^{-},z_m^{+}\}}Prob(\mathbf{\tilde{Q}_m} \in B_\mathbf{r_1^{(m)}} | Z_m = z_m) \cdot Prob(Z_m = z_m) \\
                                                            &= Prob(\mathbf{\tilde{Q}_m} \in B_\mathbf{r_1^{(m)}} | Z_m = z_m^{-}) \cdot Prob(Z_m = z_m^{-}) \\
                                                            &+ Prob(\mathbf{\tilde{Q}_m} \in B_\mathbf{r_1^{(m)}} | Z_m = z_m^{+}) \cdot Prob(Z_m = z_m^{+}) \\
                                                            &= Prob(\mathbf{\tilde{Q}_m} \in B_\mathbf{r_1^{(m)}} | Z_m = z_m^{-}) \cdot (1 - p_m) \\
                                                            &+ Prob(\mathbf{\tilde{Q}_m} \in B_\mathbf{r_1^{(m)}} | Z_m = z_m^{+}) \cdot p_m \implies \text{($\lim{B_\mathbf{r_1^{(m)}} \to 0}$ and (\ref{eq:14}), (\ref{eq:15}))} \\
                  q_{anti}(\mathbf{r_m},\mathbf{r_1^{(m)}}) &= \frac{1 - p_m}{p_m} \cdot q(\mathbf{r_m}, \mathbf{r_1^{(m)}}) \cdot \mathbf {1} _{S_m^{+}}(\mathbf{r_1^{(m)}}) + \frac{p_m}{1 - p_m} \cdot q(\mathbf{r_m}, \mathbf{r_1^{(m)}}) \cdot \mathbf {1} _{S_m^{-}}(\mathbf{r_1^{(m)}})
    \end{split}
\end{equation}

\noindent The analysis so far has been general, without having taken into account Lemma \ref{prop:l1}.
With $p_m = 1 - p_m = 0.5$, the last expression readily becomes equal to $q_{init}(\mathbf{r_m},\mathbf{r_1^{(m)}})$:

\begin{equation}
    q_{anti}(\mathbf{r_m},\mathbf{r_1^{(m)}}) = q(\mathbf{r_m}, \mathbf{r_1^{(m)}}) \cdot \mathbf {1} _{S_m^{+}}(\mathbf{r_1^{(m)}}) + q(\mathbf{r_m}, \mathbf{r_1^{(m)}}) \cdot \mathbf {1} _{S_m^{-}}(\mathbf{r_1^{(m)}}) = q(\mathbf{r_m}, \mathbf{r_1^{(m)}})
\end{equation}

\end{proof}

Regarding performance, Algorithm \ref{alg:1} can be represented by a Markov chain,
the average time to termination of which is shown in the appendix to also depend on the relative magnitudes of $p_m$ and $1 - p_m$.
With equal such probabilities, the chain terminates on average in the minimum possible of 3 steps, which makes it quite efficient, considering that the sampling pdf $q(\mathbf{r},\mathbf{r_1})$
used for the transitions is only ever computed once for every first transition domain.

A typical application of the FRW method employing Algorithm \ref{alg:1} is illustrated in Fig. \ref{fig:rw-antithetics-example}.
The points selected are agnostic to their geometric position on the surface of the transition domain, as opposed to the SMS scheme in \cite{Huang2024}.
It may well be the case that there is no geometric symmetry across the Gaussian surface for the two points sampled,
as is showcased in Fig. \ref{fig:rw-antithetics-example} for both initial transition domains depicted.
Only the sign of $\nabla_\mathbf{r} P(\mathbf{r},\mathbf{r_1}) \cdot \vv{\mathbf{n}}(\mathbf{r})$ actually matters.
The algorithm can be extended in a straightforward manner for the purpose of obtaining not just one,
but multiple antithetic pairs per first transition domain.
All antithetic pairs produced are also obviously IID as required.

\begin{figure}[h!]
    \centering
    \def\scalefactor{0.6}
    \begin{tikzpicture}
    % scalefactor variable must be already set
    \def\s{\scalefactor}
    % gaussian
    \filldraw[dotted, ultra thick, draw=red, fill=black!5] (-1*\s, -1*\s) rectangle (7*\s, 3*\s);
    % annotate gaussian
    \draw[->, semithick] (-0.5*\s, -2.2*\s) to[out=160, in=-80] (-0.8*\s, -1*\s);
    \node[anchor=west] at (-0.5*\s, -2.2*\s) {\small Gaussian surface enclosing conductor i};

    % conductors
    \filldraw[fill=black!15, draw=black] (0*\s, 0*\s) rectangle (6*\s, 2*\s);
    \filldraw[fill=black!15, draw=black] (5*\s, 6*\s) rectangle (11*\s, 8*\s);
    \node at (3*\s, 1*\s) {\small Conductor $i$};
    \node at (8*\s, 7*\s) {\small Conductor $j$};

    % first node
    \draw[line width=0.5mm] (2*\s, 3*\s) circle[radius=2pt];
    \draw[line width=0.5mm] (7*\s, 0.5*\s) circle[radius=2pt];

    % draw transition domains
    \foreach \rcx/\rcy/\rtx/\rty/\ra/\color/\style in {2/3/2.7/4/1/blue!40!green/solid, %a1
                                                       2.7/4/4.7/4.5/2/blue!40!green/solid, %a2
                                                       4.7/4.5/5.5/6/1.5/blue!40!green/solid, %a3
                                                       2/3/1/3.5/1/blue!40!green/dashed, %a'1
                                                       1/3.5/1.5/5/1.5/blue!40!green/dashed, %a'2
                                                       1.5/5/4.5/7.8/3/blue!40!green/dashed, %a'3
                                                       4.5/7.8/5/7.5/0.5/blue!40!green/dashed, %a'4
                                                       7/0.5/8/0.2/1/red!40!blue/solid, %b1
                                                       8/0.2/9/2.2/2/red!40!blue/solid, %b2
                                                       9/2.2/8.5/5.2/3/red!40!blue/solid, %b3
                                                       8.5/5.2/8.7/6/0.8/red!40!blue/solid, %b4
                                                       7/0.5/6/1.2/1/red!40!blue/dashed} { %b'2
                                                       %7/0.5/7.5/-0.5/1/red!40!blue/dashed, %b'1
                                                       %7.5/-0.5/6/0.3/1.5/red!40!blue/dashed} { %b'2
        \def\cx{\rcx*\s}
        \def\cy{\rcy*\s}
        \def\tx{\rtx*\s}
        \def\ty{\rty*\s}
        \def\a{\ra*\s}
        %\fill[fill=\color!5, opacity=0.1] (\cx - \a, \cy - \a) rectangle (\cx + \a, \cy + \a);
        \draw[draw=\color, line width=0.2mm, \style] (\cx - \a, \cy - \a) rectangle (\cx + \a, \cy + \a);
        \draw[->, line width=0.4mm, draw=\color, \style] (\cx, \cy) to (\tx, \ty);
        \fill (\tx, \ty) circle[radius=2pt];
    }

    % annotate the first domain
    \draw[->, semithick] (0*\s, 6.2*\s) to[out=-90, in=180] (1*\s, 3.6*\s);
    \draw[->, semithick] (0*\s, 6.2*\s) to[out=0, in=75] (2.7*\s, 4.0*\s);
    \node[anchor=west] at (-1.6*\s, 6.5*\s) {\small Antithetic samples};
    %\node[anchor=west] at (-1.6*\s, 6.5*\s) {\small First transition domain};
\end{tikzpicture}
    \caption{Floating random walk with antithetic sampling example. Two walks are initiated from each first transition cube from antithetic samples on their surfaces.
             The walks starting from antithetic samples are drawn with dashed lines. For the first sample (green lines) both walks land on the surface of conductor $j$, while for the second (purple lines) one finishes on conductor $j$ and the other on conductor $i$.}
    \label{fig:rw-antithetics-example}
\end{figure}

\subsection{Variance Computation} \label{sec:estimator}

In summary, when employing a single antithetic pair, we sample two points on the surface of any given first transition domain according to Algorithm \ref{alg:1}
and launch walks from \textit{both} of these points. Each first transition domain centered on a point of the VGS thus contributes two full random walks
and two weight values to the final capacitance calculation.
This is similar in concept to \cite{Huang2024}, with the important difference that the two weight values are \textit{guaranteed} to be of opposite signs.
Although more than one antithetic pairs can be obtained on the surface of each first transition domain,
in what follows, for simplicity of exposition, we restrict our analysis to a single antithetic pair per first transition domain.

As also discussed in \cite{Huang2024}, a potential benefit of starting multiple walks from the same first transition domain, instead of
a single walk, is reduction in the number of samples on the VGS necessary for convergence of the FRW method.
Taking fewer samples on the VGS means forming fewer first transition domains centered on those samples,
and correspondingly fewer matches through a Green's function precomputation scheme.
Specifically, precomputation matching and calculation of $q(\mathbf{r},\mathbf{r_1})$ and $K(\mathbf{r})$ only need to take place
once for every $N$ random walks.

An even more important potential benefit is variance reduction achieved. In this section,
we specifically examine how the mean and variance of the Monte Carlo estimator for capacitance are affected by the
correlation among weight values introduced by generalized antithetic sampling (hereafter referred to as GAS).
Each transition domain $i$ on the VGS out of $n$ will contribute two full random walks and two weight values to the final
capacitance calculation to a total of $2n$.

To form the actual capacitance estimator we need to take into account the potential $\phi_{i}$ of the conductor
each walk eventually lands on. The convention we use, typical for capacitance extraction,
is set the potential of the master conductor to 0 V, and the potential of all the other conductors to 1 V, the master conductor being where walks start from.
With this convention, the estimator essentially comprises weights only from walks that do \textit{not} land on the master conductor.

\newcommand\numberthis{\addtocounter{equation}{1}\tag{\theequation}}

\begin{align*}
    w_{i} &= \frac{K_{i}}{2L_{i}} & X &= \frac{1}{n_{+}}\sum\limits_{i=1}^{n_{+}}w_{i} \cdot \phi_{i} \\
    \tilde{w}_{i} &= -\frac{K_{i}}{2L_{i}} & \tilde{X} &= \frac{1}{n_{-}}\sum\limits_{j=1}^{n_{-}}\tilde{w}_{j} \cdot \tilde{\phi}_{j} \\
    w_{i} \cdot \tilde{w}_{i} &= -w_{i}^2 \leq 0 & n_{+} &= n_{-} = n, \phi_{i}, \tilde{\phi}_{j} \in \{0, 1\} \numberthis \label{eq:7}
\end{align*}

\noindent The factor of 2 in the denominator of the weights above, serves to make the estimator unbiased in light of the fact that there are now $2n$ random walks or weight values,
from each $n$ transition domains on the VGS.
If all positive weight values from first transition domains $i$ are arranged in a positive group and all negative values in a corresponding negative group,
then $X$ and $\tilde{X}$ are the random variables representing the sample means of these two groups.
The equality of group sizes is evident since, for any first transition domain, one positive and one negative weight value will always be produced by construction,
see Algorithm \ref{alg:1}.

The proposed GAS estimator is simply the sum of the above group sample means:

\begin{equation}
    \hat{\mu}_{GAS} = F \cdot (X + \tilde{X}) = F \cdot \frac{1}{n}(\sum\limits_{i=1}^{n}w_i \phi_{i} + \sum\limits_{j=1}^{n}\tilde{w}_j \tilde{\phi}_{j}) \label{eq:8}
\end{equation}

\noindent where $F$ is the permittivity integral over the VGS included in (\ref{eq:5}).
GAS is similar to the the Importance Sampling-Stratified Sampling (IS+SS) estimator already used in FRW methodology \cite{Yu2013, Yu2013a}, where
$X$ and $\tilde{X}$ can best be thought of as the sample means per positive and negative stratum respectively.
The subtle difference is that GAS jointly considers $2n$ weight values for each $n$ first transition domains on the VGS,
whereas IS+SS considers only $n$ weight values for the same domains, approximately $n/2$ per stratum.
Also, unlike IS+SS, in GAS, $X$ and $\tilde{X}$ are not independent, therefore
we need to explicitly examine the mean and variance of the sum of $X$ and $\tilde{X}$ according to
the antithetics method \cite{Owen2013}:

\begin{equation}
    \begin{split}
        E[\hat{\mu}_{GAS}] &= F \cdot (E[X] + E[\tilde{X}]) \\
        Var[\hat{\mu}_{GAS}] &= F^2 \cdot (Var[X] + Var[\tilde{X}] + 2Cov[X, \tilde{X}]) \label{eq:9}
    \end{split}
\end{equation}

The expectation of the sum of group sample means is equal to the sum of the expectations of the sample means,
exactly as in the IS+SS case.
In the appendix, we prove that the expected value of the GAS estimator is indeed unbiased for the capacitance problem
within the framework of generalized antithetics.
By contrast, the variance of the sample means is the sum of the sample variances, modified by the \textit{covariance}
of the sample means across the two groups.

Combining with (\ref{eq:7}), the covariance of $X$ and $\tilde{X}$ can be related to the moments of $w_{i}$, $\tilde{w}_{j}$ as follows:

\begin{equation}
    \begin{split}
        Cov[X, \tilde{X}] &= E[X \cdot \tilde{X}] - E[X] \cdot E[\tilde{X}] = \frac{1}{n^2}\sum\limits_{i=1}^{n}E[w_{i}\tilde{w}_{i}]\phi_{i}\tilde{\phi}_{i} - \frac{1}{n^2}\sum\limits_{i=1}^{n}E[w_{i}]E[\tilde{w}_{i}]\phi_{i}\tilde{\phi}_{i} \label{eq:10}
    \end{split}
\end{equation}

\noindent The equalities have taken into account that $w_{i}$, $\tilde{w}_{j}$ are uncorrelated for ${i}\neq{j}$, hence $E[w_{i}\tilde{w}_{j}] = E[w_{i}]E[\tilde{w}_{j}]$.
This is covariance in expectation. Monte Carlo (or sample) covariance can be estimated from actual weight values $w$, $\tilde{w}$ from (\ref{eq:7}) and (\ref{eq:10})
by using sample means in place of actual expectation (as is routinely done for sample variance):

\begin{equation} \label{eq:11}
    Cov[X, \tilde{X}] \approx \Delta = \frac{1}{n^2}[\sum\limits_{i=1}^{n^\prime}w_{i}\tilde{w}_{i} - \frac{1}{n^\prime}(\sum\limits_{i=1}^{n^\prime}w_{i})(\sum\limits_{j=1}^{n^\prime}\tilde{w}_{j})]
\end{equation}

\noindent The above also takes into account that, for non-zero target potentials,
only $\phi_{i}\tilde{\phi}_{i}=1$ and $\phi_{i}\tilde{\phi}_{j}=1$ terms survive.
Effectively, correlation comes only from antithetic samples where both walks land on conductors other than the master conductor
and $n^\prime$ is the number of these walks, where $n^\prime<n$.
We can now prove the following result (full proof in the appendix):

\begin{theorem}[\textit{\underline{Variance Reduction}}] \label{prop:t1}
Generalized antithetic samples produced by Algorithm \ref{alg:1} lead to maximal negative
correlation in (\ref{eq:10}) across sets of $n^\prime$ weight pairs of matching magnitudes.
Sample variance is guaranteed to be less than the sample variance of vanilla IS+SS FRW \cite{Yu2013a}
for approximately the same number of total positive and total negative weight values
on the Gaussian surface.
Approximate equality of total positive vs total negative weight values always holds for IS+SS FRW.
\end{theorem}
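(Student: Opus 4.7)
The plan is to exploit the deterministic per-pair identity enforced by Algorithm \ref{alg:1}: on the $i$-th first transition domain, the two sampled endpoints have identical weight magnitudes $K_i/(2L_i)$ but opposite signs, so $w_i + \tilde{w}_i = 0$ and $w_i \tilde{w}_i = -w_i^2$ hold sample-wise, not merely in expectation. Combined with Lemma \ref{prop:l2}, which guarantees that $w_i$ and $\tilde{w}_i$ share a marginal distribution, and Lemma \ref{prop:l1}, which forces that common marginal to assign mass $1/2$ to each sign, this pairwise cancellation supplies all the algebraic data needed.

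For the maximal-negative-correlation assertion, I would first observe that among all couplings $(U,V)$ of random variables with this common two-point marginal on $\{\pm K/(2L)\}$, the Cauchy--Schwarz inequality gives $E[UV] \geq -E[|UV|] = -E[U^2]$, with equality if and only if $V = -U$ almost surely. Algorithm \ref{alg:1} realizes exactly this equality case per pair, so the covariance $\mathrm{Cov}[X, \tilde{X}]$ in (\ref{eq:10}) attains its most negative admissible value at the given marginals. The same cancellation carries over directly to the sample covariance $\Delta$ in (\ref{eq:11}): substituting $\tilde{w}_i = -w_i$ term-wise collapses $\Delta$ to $\tfrac{1}{n^{2}}\bigl[-\sum_i w_i^2 + \tfrac{1}{n'}(\sum_i w_i)^2\bigr]$, which a second application of Cauchy--Schwarz forces to be non-positive in every realization, and strictly negative whenever the per-pair magnitudes vary, which is the generic case in realistic layouts.

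For the variance comparison against vanilla IS+SS, I would match strata: by Lemma \ref{prop:l2} the positive- and negative-stratum marginals induced by GAS coincide with the conditional marginals implicitly used by IS+SS under (\ref{eq:5})--(\ref{eq:6}), so the terms $\mathrm{Var}[X]$ and $\mathrm{Var}[\tilde{X}]$ in (\ref{eq:9}) match their IS+SS counterparts term-for-term at equal per-stratum sample sizes. IS+SS draws its first transition domains independently, contributing zero covariance, so the only residual between the two estimator variances is precisely the $2\,\mathrm{Cov}[X, \tilde{X}]$ term, shown above to be non-positive for every realization, giving the claimed variance gap. The remaining assertion on IS+SS's natural sign balance follows immediately from Lemma \ref{prop:l1}: each of the $2n$ independently sampled IS+SS domains delivers a positive or negative weight with probability exactly $1/2$, so a standard binomial concentration argument produces approximately $n$ weights per sign.

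The main obstacle I anticipate is the bookkeeping needed to legitimately align the two regimes being compared: $n$ paired domains producing a deterministically balanced $2n$ weights under GAS, versus $2n$ independent domains producing a stochastically balanced $\approx 2n$ weights under IS+SS. Getting the normalization factors of (\ref{eq:8}), (\ref{eq:9}), and (\ref{eq:11}) to line up so that the sample-wise inequality on $\Delta$ translates cleanly into the stated variance gap at matched total sample counts --- rather than leaving an $O(1/n')$ slack that could be mistaken for a proof gap --- is what I would take most care with.
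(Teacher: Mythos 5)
Your proposal is correct and follows essentially the same route as the paper's own proof: exploit the sample-wise identity $\tilde{w}_i=-w_i$ to show the empirical covariance $\Delta$ of (\ref{eq:11}) is non-positive in every realization, then express the GAS variance as the IS+SS variance plus $2\Delta$ at matched positive/negative counts, the latter balance coming from Lemma \ref{prop:l1}. Your Cauchy--Schwarz step $\bigl(\sum_i w_i\bigr)^2\le n'\sum_i w_i^2$ is exactly the paper's identity $n'\sum_i w_i^2-\bigl(\sum_i w_i\bigr)^2=\sum_{i<j}(w_i-w_j)^2\ge 0$ in disguise, and your per-pair extremal-coupling argument for maximal negative correlation is a slightly cleaner packaging of the paper's comparison against an alternative estimator with one sign flipped; the bookkeeping concern you flag at the end is real but is left at the same level of informality ("approximately the same number of weights") in the paper itself.
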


This result applies to the actual Monte Carlo estimator from finite samples,
and not just the theoretical expectations of (\ref{eq:9}).
Dielectrics contained in the first transition domain can be entirely arbitrary.
It has also been extensively validated through real-world cases, such as those presented in section \ref{sec:results},
where variance reduction vs IS+SS FRW is indeed quite pronounced.

\section{Numerical Experiments} \label{sec:results}

\subsection{Setup} \label{sec:setup}
Algorithm \ref{alg:1} and the associated variance calculation of (\ref{eq:11}) were implemented within our existing FRW solver and tested
in a series of design cases.
Our primary objective is to measure performance, i.e., extraction speed, and not extraction precision per se.
Relative precision should of course be maintained in all cases.
The test environment used throughout was a Linux machine with Intel Xeon Gold 6326 CPU @ 2.90GHz.
All experiments employed a single thread to better capture experienced speedups.
Note that the method proposed in this paper does not in any way affect the embarassingly parallel nature of the core FRW method.

The particular variants of FRW estimators compared were:

\begin{itemize}
    \item \textbf{IS+SS}: This is the baseline estimator, based on \cite{Visvardis2023}, and it employs importance sampling and stratified sampling on the first transition domain.
    \item \textbf{SMS-n}: This is the multiple-shooting estimator of \cite{Huang2024} employing $n={2,4,8}$ points on each first transition domain selected symmetrically with respect to the VGS.
    \item \textbf{GAS-n}: This is the GAS estimator proposed in this paper, employing $n={2,4,8}$ points on each first transition domain, selected according to Algorithm \ref{alg:1}.
\end{itemize}

In various metrics that follow, we need to estimate the value $p_m = p$ from Lemma \ref{prop:l1} for a \textit{discretized} first transition domain.
The surface of the domain is discretized into $6 * K * K$ square cells ($K * K$ per cube side), with $K$ a user-chosen discretization constant.
Then, with excitation at the center of each of those surface cells, we solve for the gradient of the Green's function
at the center of the domain and compute the inner product of this gradient vector with $\vv{\mathbf{n}}$.
The solver uses standard methods, e.g., FDM, \cite{Yu2013, Visvardis2023}.
Thus we get $6 * K * K$ values which are partitioned into the distinct sets $S^{+}, S^{-}$ according to their sign.
The probability $p$ is then the ratio of the sum of positive inner products over the sum of the absolute values of all inner products.
Note that the above procedure only pertains to estimating the value of $p$ for particular metrics: the GAS schemes need only sample according to $p$, specifically
by Algorithm \ref{alg:1}.

On the other hand, the general premise upon which, for example, the SMS-2 scheme relies on is,
for any sample on the surface of the domain, namely $\mathbf{r_1}^{+}$, to identify its symmetric on the surface of the domain with respect to the Gaussian surface, i.e., with respect to $\vv{\mathbf{n}}$,
namely $\mathbf{r_1}^{-}$, and then form an estimator of the form \cite{Huang2024}:

\begin{equation}
    \begin{split}
        \hat{\mu}_{SMS} = F \cdot \frac{1}{n}(\sum\limits_{i=1}^{n}2 \alpha_{i} w_i^* \phi_{i}^* + 2 (1-a_{i}) \tilde{w}_i^* \tilde{\phi}_{i}^*) \\
        \alpha_{i} = \frac{\lvert\nabla_\mathbf{r} P(\mathbf{r},\mathbf{r_1^{+}}) \cdot \vv{\mathbf{n}}(\mathbf{r})\rvert}{\lvert\nabla_\mathbf{r} P(\mathbf{r},\mathbf{r_1^{+}}) \cdot \vv{\mathbf{n}}(\mathbf{r})\rvert + \lvert\nabla_\mathbf{r} P(\mathbf{r},\mathbf{r_1^{-}}) \cdot \vv{\mathbf{n}}(\mathbf{r})\rvert} \label{eq:13}
    \end{split}
\end{equation}

\noindent where $w_i^*, \phi_{i}^*$ are the weight value and destination potential of the initial sample
and $\tilde{w}_i^*, \tilde{\phi}_{i}^*$ the weight value and destination potential of the symmetric sample.
The above equations should be contrasted against equations (\ref{eq:7}) and (\ref{eq:8}) for GAS, and, apart from the the way the samples, i.e.,
the weight values, are selected on the surface of the first transition domain, there is obvious mathematical similarity.
The most favorable value for $a$ is 0.5 \cite{Huang2024}, however it does vary significantly away from 0.5 in practice, as will be discussed in section \ref{sec:discussion}.
The probability $p$ does not itself appear in the estimator, and is theoretically identical to 0.5, and practically very close to it, as will be discussed in \ref{sec:discussion} and the appendix.
In general, both scheme families (GAS- and SMS-) utilize the inner products of gradient vectors with the Gaussian surface normal vector, however,
whereas the GAS schemes look at the relative sizes of \textit{aggregations} of inner products,
the SMS schemes look at the relative sizes of \textit{specific} inner products.

It is entirely possible functionally to take $N$ antithetic pairs per first transition domain with Algorithm \ref{alg:1}, instead of just one, and these will
be by construction independent to one another. Then, the total sample covariance will be a sum of covariance terms of the form (\ref{eq:11}).
GAS allows any multiple of 2 as number of samples on the surface of each first transition domain (e.g., 4, 6, 8, 10), as opposed to
SMS that allows only powers of 2 (e.g., 4, 8, 16, 32) and thus GAS is more flexible in comparison.

However, an important caveat for both multiple shooting schemes (SMS and GAS) is that the number of antithetic pairs per first transition domain should not be increased indefinitely.
In practice, the value of $F$ from (\ref{eq:8}) still needs to be estimated through Monte Carlo integration, requiring a sufficient number of samples over \textit{the whole} VGS.
Starting a large number of walks per single first transition domain may limit distinct transition domains, i.e., distinct samples on the VGS.
We refer the interested reader to \cite{Huang2024} for a thorough discussion regarding the balance of samples on the VGS and on first transition domains.

For this reason, both SMS and GAS, can benefit from utilizing a joint convergence criterion for both weight values and $F$ \cite{Huang2024},
or even pre-estimating the value of $F$ by itself through a separate Monte Carlo procedure.
As our primary goal is to evaluate performance benefits over the current state-of-the-art, for our main results we have limited the number of samples per first transition domain to 8,
without utilizing such a special convergence criterion.

\subsection{Evaluation} \label{sec:evaluation}
We tested against 9 design cases, roughly divided into two categories: designs without LDE and designs with LDE.
The non-LDE design cases (cases 1-4) included simpler structures such as a capacitor and inductor (cases 1-2),
as well as more complicated IC designs, e.g., with interposers (cases 3 and 4), all under contemporary technology nodes.
The LDE design cases (cases 5-9) were structures of the same vein and technology node as utilized in \cite{Visvardis2023} (5nm)
, i.e., metal wires placed above a conductive plane, or between conductive planes,
all affected specifically by loading, thickness variation and dielectric damage.
Note that these are exactly the effects that would be applied in real-world VLSI processes
and create for example dielectric inhomogeneities.
All relevant effects are applied identically before extraction for all FRW variants.
Of these, cases 5-7 did not have a top conductive plane, whereas cases 8-9 did.
Aggregate statistics for the test cases are shown in Table \ref{tbl:1}.

\begin{table*}
    \begin{center}
    \caption{Test cases statistics}
    \label{tbl:1}
        \begin{tabular}{| c | c | c | c |}
            \toprule
            Case & \#nets & \#blocks & $\#blocks_{m}$ \\ \hline
            1 & 2 & 18 & 18 \\
            2 & 1 & 3595 & 3595 \\
            3 & 102 & 51491 & 51491 \\
            4 & 20 & 22711 & 22711 \\
            5 & 5 & 116 & 4 \\
            6 & 5 & 742 & 4 \\
            7 & 5 & 468 & 4 \\
            8 & 5 & 571 & 8 \\
            9 & 5 & 1209 & 5 \\
            \bottomrule
        \end{tabular}
    \end{center}
\end{table*}

The termination criterion for net extraction in the experiments was always the relative standard error, i.e., the ratio of sample standard deviation over sample mean, to be less than 0.5\% (i.e., 0.005).
Convergence was checked every 1000 completed walks (referred to as batch size), to achieve better granularity over possible differences in walks.
Several independent extractions for each design case were ran to mitigate the inherent stochasticity of the FRW method.
The LDE cases utilized the machine learning models for Green's function data from \cite{Visvardis2023} to facilitate extraction.

Besides wall times, another performance metric examined was the number of first transition domains formed until convergence,
which we consider to be a more objective metric of variance reduction for these types of approaches, all other things being equal,
because it directly measures how many samples are necessary to achieve the same level of accuracy by a Monte Carlo estimator.
Apart from the way samples were selected on the surface of the first transition domains,
all other infrastructure for our implementation of the variants was identical, including Green's function data through precomputation and machine learning models.
Hence we do not expect hops to differ substantially among the variants and focus on total number of first transition domains.
The latter has a straightforward interpretation in terms of walks as well: for IS+SS it is exactly the number of walks, whereas for
SMS-n and GAS-n walks it can be found by multiplying the number of first transition domains by $n$.

\begin{table*}
    \begin{center}
        \caption{Results for Baseline}
        \label{tbl:2}
        \begin{tabular}{| c | c | c | c |}
            \toprule
                & \multicolumn{3}{ c |}{IS+SS} \\
            \cline{2 - 4}
            Case & $C_{tot}(aF/\mu m)$ & \#first domains & time(sec) \\
            \midrule
                1 & 11.79 & 604600 & 18.1 \\
                2 & 195.9 & 1798000 & 89.4 \\
                3 & 1700317.74 & 42750100 & 5046 \\
                4 & 10099.06 & 83884000 & 10062 \\
                5 & 25.93 & 301300 & 334.2 \\
                6 & 17.38 & 159700 & 413.3 \\
                7 & 33.54 & 489300 & 324.2 \\
                8 & 15.37 & 167000 & 295.1 \\
                9 & 24.26 & 77100 & 114.8 \\
            \bottomrule
        \end{tabular}
    \end{center}
\end{table*}

\begin{table*}
    \begin{center}
        \caption{Results for SMS-2 and GAS-2}
        \label{tbl:3}
        \begin{tabular}{| c | c | c | c | c | c | c |}
            \toprule
                & \multicolumn{3}{ c |}{SMS-2}
                & \multicolumn{3}{ c |}{GAS-2} \\
            \cline{2 - 7}
            Case & $C_{tot}(aF/\mu m)$ & \#first domains & time(sec) & $C_{tot}(aF/\mu m)$ & \#first domains & time(sec) \\
            \midrule
                1 & 11.45 & 324000 & 15.3 & 11.78 & 298700 & 14.5 \\
                2 & 191.2 & 899000 & 63.7 & 196.44 & 899000 & 61.2 \\
                3 & 1650610.05 & 11820200 & 2504.9 & 1698504.21 & 13181000 & 2643.2 \\
                4 & 9753.36 & 12622700 & 3572.7 & 10076.25 & 12091100 & 3061.9 \\
                5 & 26.33 & 163400 & 250.3 & 25.91  & 148200 & 220.9 \\
                6 & 17.27 & 95100  & 378.9 & 17.35 & 79200 & 317.1 \\
                7 & 32.97 & 259300 & 234.4 & 33.46 & 245400 & 208.5 \\
                8 & 15.16 & 106700 & 264.3 & 15.38  & 82700 & 204.1 \\
                9 & 23.91 & 53100 & 100.1 & 24.27 & 38300 & 75.3 \\
            \bottomrule
        \end{tabular}
    \end{center}
\end{table*}

\begin{table*}
    \begin{center}
        \caption{Results for SMS-4,8 and GAS-4,8}
        \label{tbl:4}
        \scalebox{0.8}{
        \begin{tabular}{| c | c | c | c | c | c | c | c | c |}
            \toprule
                & \multicolumn{2}{ c |}{SMS-4}
                & \multicolumn{2}{ c |}{GAS-4}
                & \multicolumn{2}{ c |}{SMS-8}
                & \multicolumn{2}{ c |}{GAS-8} \\
            \cline{2 - 9}
            Case & \#first domains & time(sec) & \#first domains & time(sec) & \#first domains & time(sec) & \#first domains & time(sec) \\
            \midrule
                1 & 183300 & 15.4
                  & 150500 & 13.2
                  & 103600 & 16.3
                  & \textbf{75700} & \textbf{12.4} \\

                2 & 450000 & 50.9
                  & 450000 & 49.4
                  & \textbf{225000} & 43.1
                  & \textbf{225000} & \textbf{42.3} \\

                3 & 7150700 & 2662.4
                  & 6895800 & 2403.1
                  & 4165400 & 2795.1
                  & \textbf{3650000} & \textbf{2304.2} \\

                4 & 7196600 & 4107
                  & 6539800 & 2904.2
                  & 3801700 & 4000.9
                  & \textbf{3292600} & \textbf{2775.7} \\

                5 & 92000 & 218.1
                  & 74500 & 166.7
                  & 49400 & 195.4
                  & \textbf{37700} & \textbf{139.6} \\

                6 & 56200 & 380.9
                  & 40400 & 269.8
                  & 33700 & 423.6
                  & \textbf{20000} & \textbf{238.8} \\

                7 & 144800 & 196.1
                  & 122700 & 160.3
                  & 76200 & 171.3
                  & \textbf{61300} & \textbf{129.6} \\

                8 & 61900 & 237.9
                  & 42000 & 158.3
                  & 36600 & 248
                  & \textbf{21400} & \textbf{140.3} \\

                9 & 31900 & 92.3
                  & 19100 & 58.9
                  & 20000 & 98.5
                  & \textbf{10000} & \textbf{53.4} \\
            \bottomrule
        \end{tabular}
        }
    \end{center}
\end{table*}

The main results of experiments for all FRW variants are shown in Tables \ref{tbl:2}, \ref{tbl:3} and \ref{tbl:4}.
The number of first transition domains reported is the sum of first transition domains to extract all nets in the design.
Total capacitance $C_{tot}$ reported is the sum of cross capacitances towards conductors of different nets for all nets in the design.
Due to space considerations, $C_{tot}$ is only reported for SMS-2 and GAS-2, since extracted values do not substantially differ for the other variants.
Accuracy is acceptable with the multiple shooting schemes compared to IS+SS for the number of antithetic pairs employed.
Additionally, note that the flavor of IS+SS used as baseline has been shown in \cite{Visvardis2023} to closely follow the accuracy of a reference tool (Raphael)
for designs with LDE.

We have experienced substantial performance gains with SMS-n, GAS-n over IS+SS in virtually every design we have tried.
They are quite faster than IS+SS for the large cases 3 and 4, while also being noticeably faster in the simpler cases 1 and 2.
In cases 3 and 4 in particular, SMS-2 achieves speedups of 2x and 2.8x respectively over IS+SS.
Our proposed GAS-2 scheme follows suit with only a slight lag behind SMS-2 in case 3 and a slight edge over SMS-2 in case 4.
For cases 1-4, GAS-2 and SMS-2 are practically very close performance-wise.

On the other hand, there is a distinct advantage of our proposed GAS-n scheme in cases 4-9 over SMS-n,
in terms of first transition domains necessary for convergence,
ranging from \textbf{19\%} for case 7 to \textbf{50\%} for case 9, both with GAS-8 vs SMS-8.
In terms of wall times, the advantage ranges from 28.5\% to 45.8\%,
translating to \textbf{~1.32x} and \textbf{~1.84x} speedups respectively.

The best wall times for all cases are achieved by the GAS-8 variant.
For the SMS schemes, extraction time deteriorates when moving from SMS-2 to SMS-4 in cases 1, 3, 4 and 6 and when moving from SMS-4 to SMS-8 in cases 8 and 9.
Figure \ref{fig:speedup} summarizes the \textit{adversarial} speedup of GAS-8 against the best available SMS variant for all cases.
This is effectively a worst-case comparison for GAS-8, since in practice selecting the most suitable SMS variant would require some kind of adaptive scheme,
e.g., similar to what is described briefly in \cite{Huang2024}, and would itself add to the experienced runtime.

For reasons alluded to in section \ref{sec:setup}, performance cannot be expected to improve indefinitely by increasing the number of antithetic pairs per first transition domain.
Using more than four antithetic pairs per first transition domain (GAS-8) is possible in a case-by-case basis, but only judiciously,
for example by tightening the standard convergence criterion, utilizing a joint weight values and $F$ convergence criterion \cite{Huang2024},
or even pre-estimating $F$ by itself before any random walks are launched.

\begin{figure*}[h!]
    \includegraphics[width=.75\textwidth]{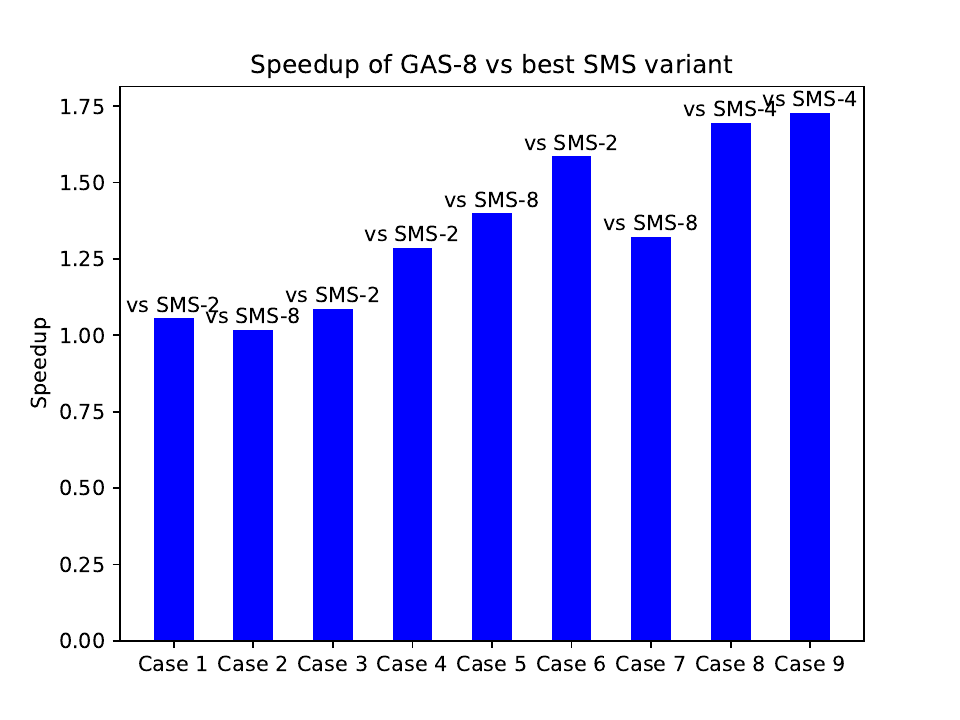}\hfill
    \caption{Time speedup of GAS-8 vs the best SMS scheme for each test case}
    \label{fig:speedup}
\end{figure*}

\subsection{Discussion} \label{sec:discussion}

Fig. \ref{fig:p_stats} reports the full distribution of $p$ across all transition domains for each test case, while Table \ref{tbl:5} reports its average and standard deviation.
It is evident that actual $p$ values encountered are very close to the ideally desirable 0.5, even with discretization, in fact, not a single first transition domain out of millions in total across all extractions, possesses a $p$ value outside the range 0.45-0.55.
In the non-LDE cases, where domains do not in general contain complex dielectric configurations and the FDM solution to $\nabla_\mathbf{r} P(\mathbf{r},\mathbf{r_1})$ is be expected to be more accurate, the standard deviation of $p$ away from 0.5 is overwhelmingly close to 0.
These findings, together with variance reduction from (\ref{eq:11}), can account for the superior performance of the GAS method.

\begin{figure*}[h!]
    \includegraphics[width=.25\textwidth]{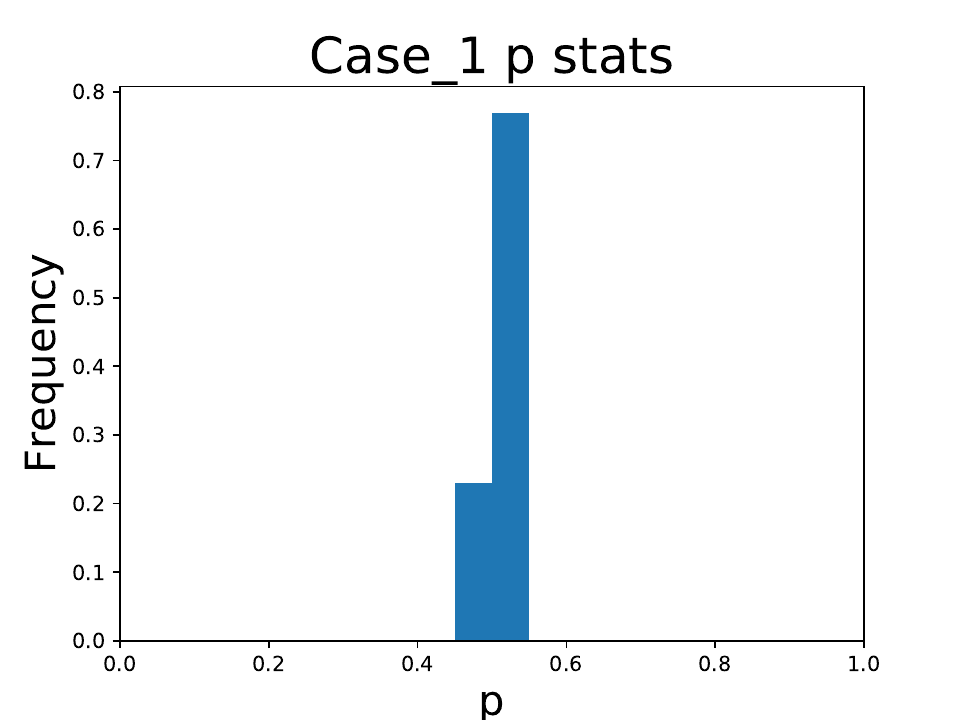}\hfill
    \includegraphics[width=.25\textwidth]{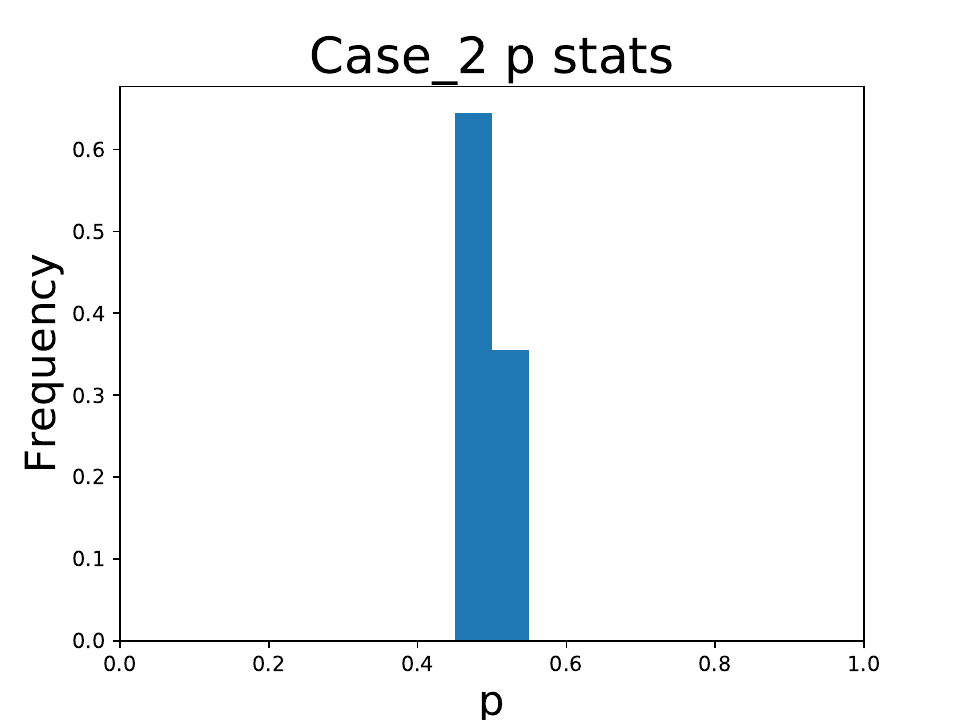}\hfill
    \includegraphics[width=.25\textwidth]{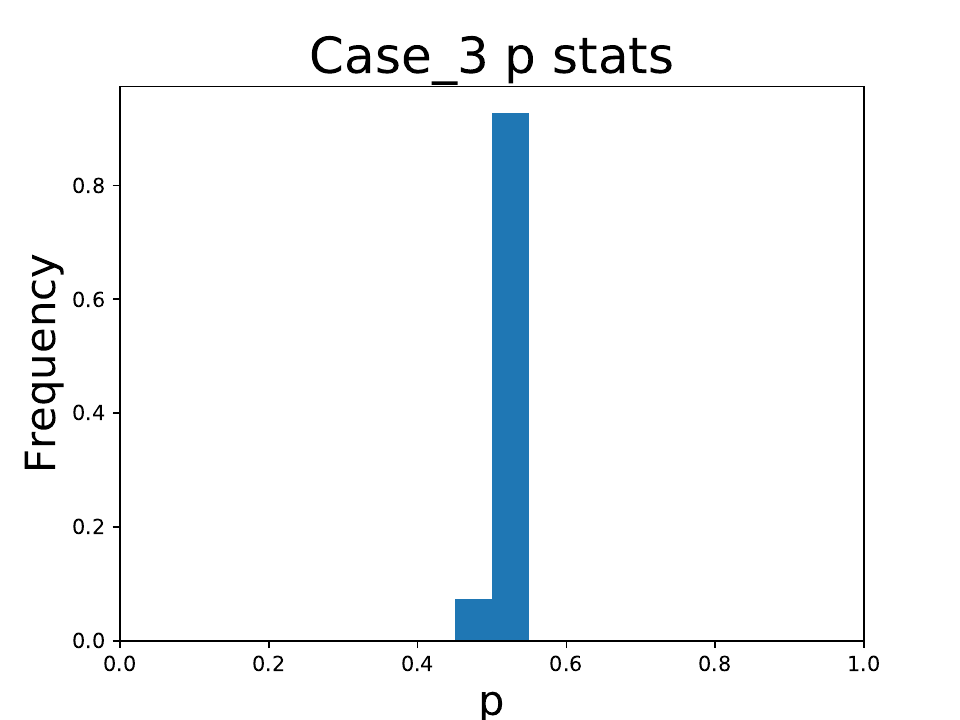}\hfill
    \includegraphics[width=.25\textwidth]{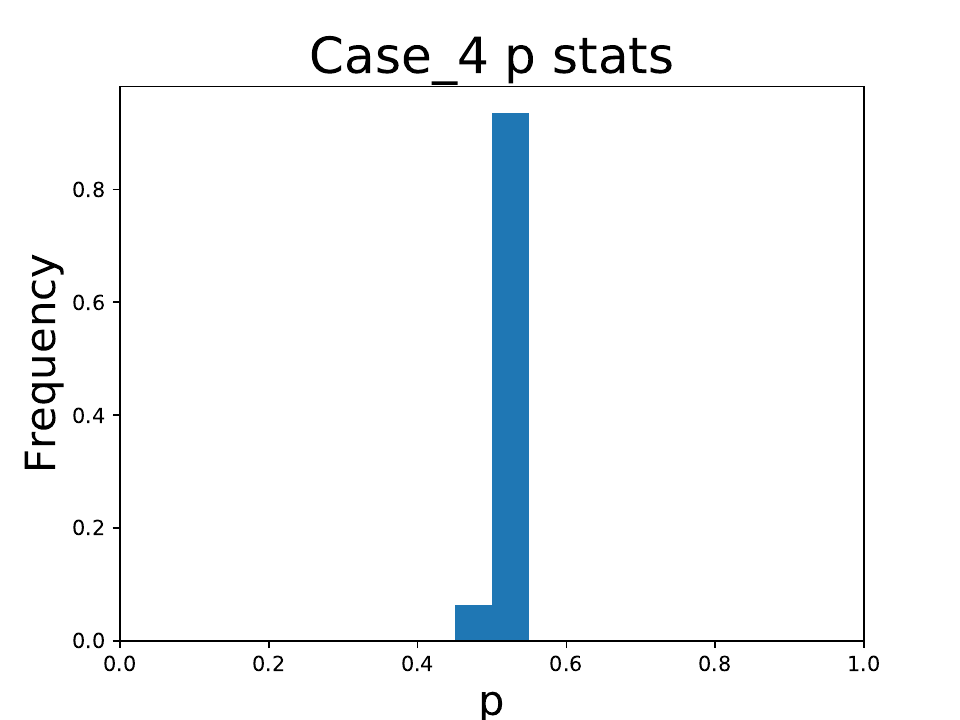}\hfill
    \includegraphics[width=.2\textwidth]{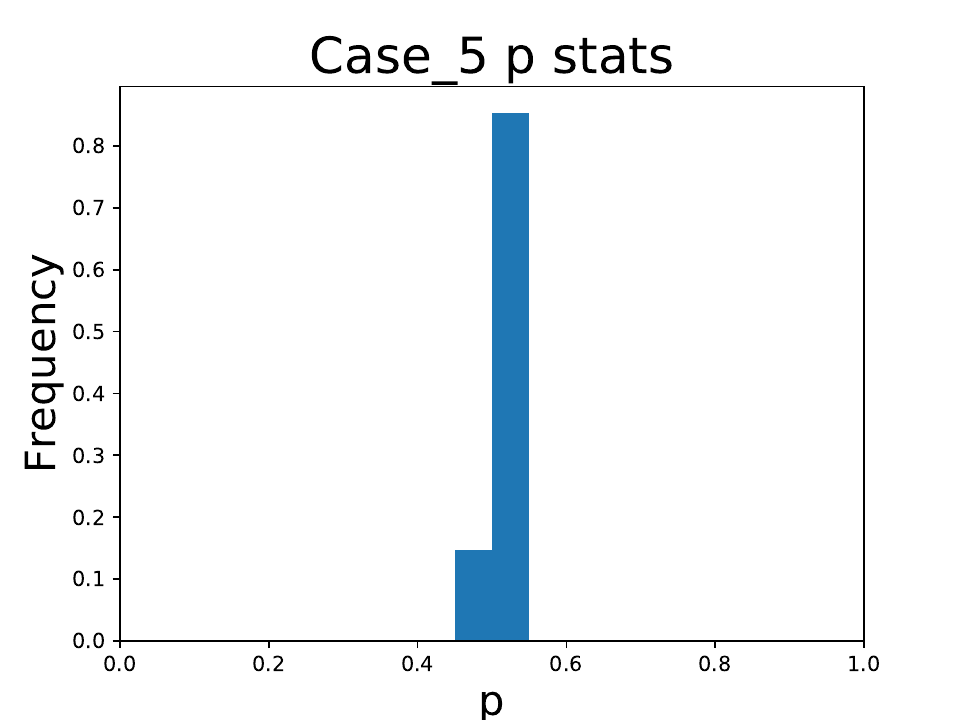}\hfill
    \includegraphics[width=.2\textwidth]{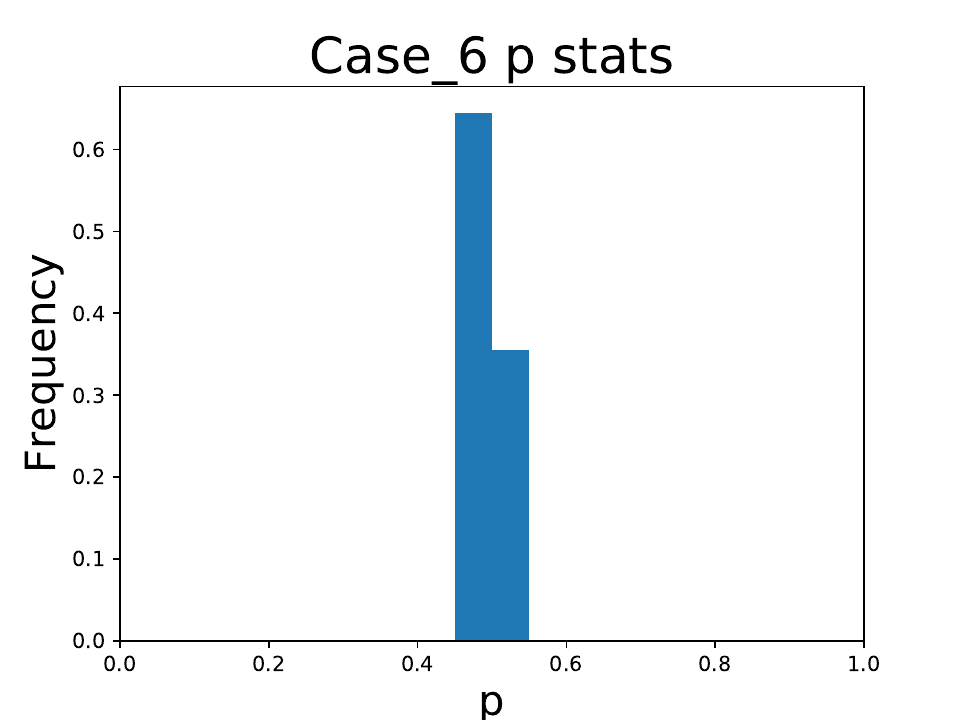}\hfill
    \includegraphics[width=.2\textwidth]{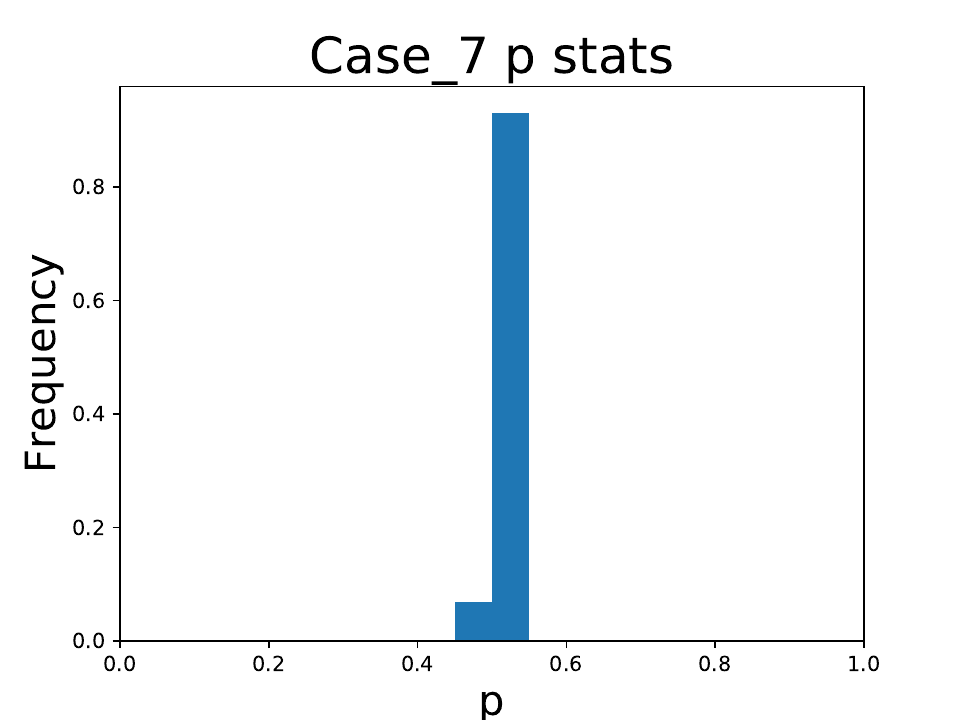}\hfill
    \includegraphics[width=.2\textwidth]{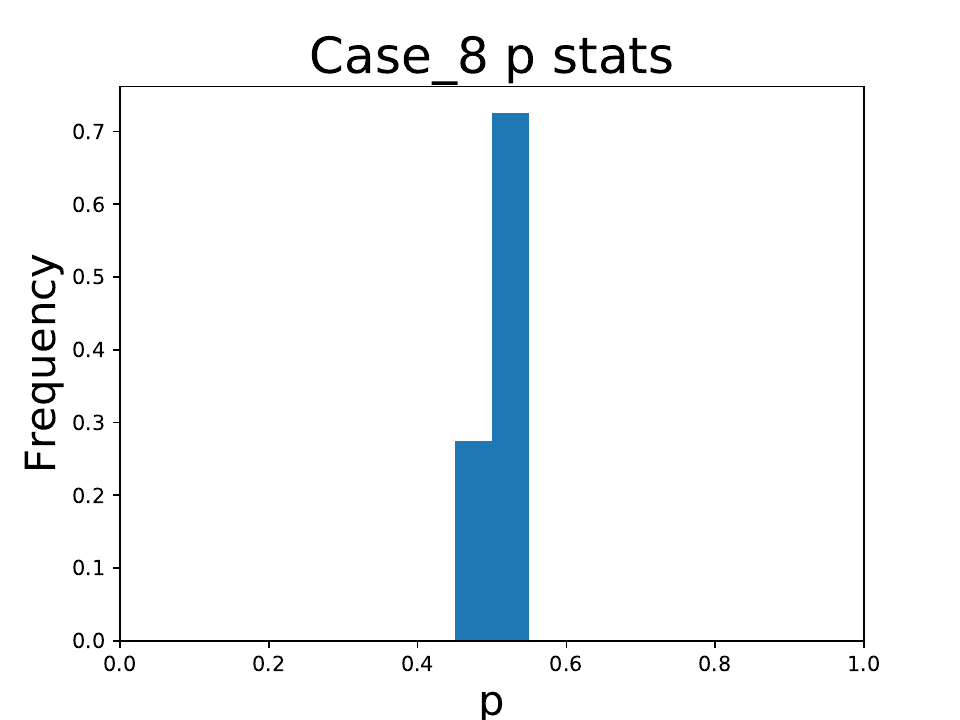}\hfill
    \includegraphics[width=.2\textwidth]{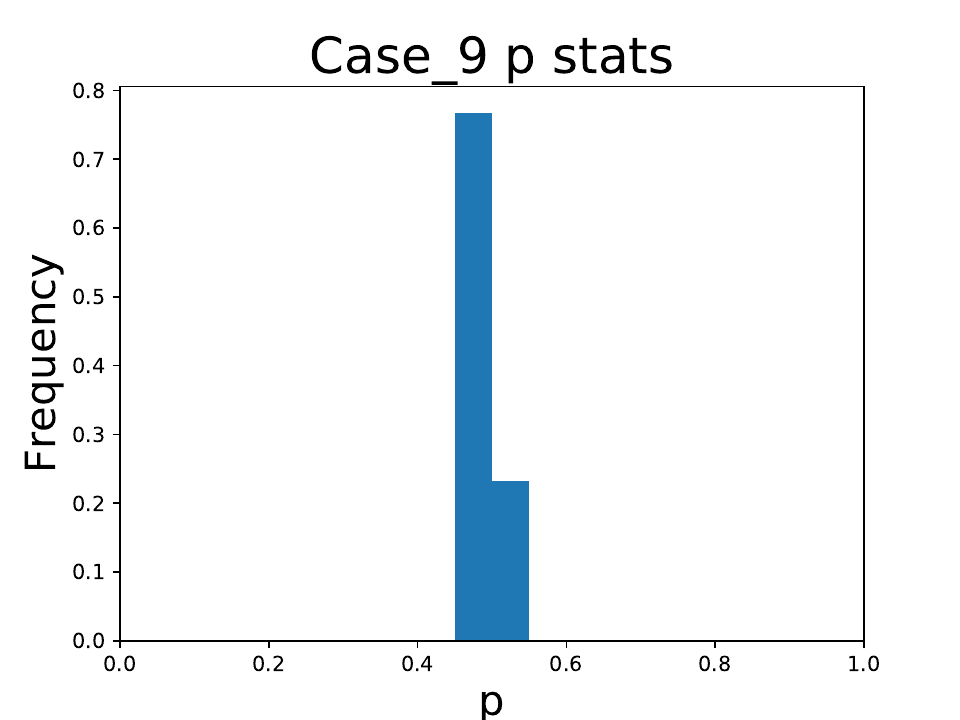}\hfill
    \caption{Histogram of the probability p of equation (\ref{eq:12}) for all test cases}
    \label{fig:p_stats}
\end{figure*}

On the other hand, SMS variance in \cite{Huang2024} uses a different mathematical formulation, where negative correlation among the samples is an emerging, not foundational property.
Examining SMS-2 and GAS-2 in particular, to better isolate the essential idea of each method, one shortcoming of SMS-2 is it has no way of guaranteeing that taking symmetric points with respect to
the Gaussian surface will lead to a negative sample covariance in (\ref{eq:11}).
This could happen primarily in cases such as 5-9, where the Green's function itself exhibits asymmetries, because ambient dielectrics are not necessarily stratified and thus do not profess xy-plane rotational invariance.
It could also happen in designs with stratified dielectrics, but now on the top and bottom sides of the Gaussian surface,
because there is no symmetry along the z-axis.
A second related issue of SMS-2 is that each weight utilizes an extra multiplicative factor $a$
necessary to make the final estimator unbiased, as per (\ref{eq:13}).
These factors are not fixed across first transition domains and can increase final variance.
GAS-2 by construction utilizes a fixed weight factor of 0.5 to achieve unbiasedness, as per (\ref{eq:7}).

\begin{table*}
    \begin{center}
    \caption{Statistics of GAS-2 vs SMS-2}
    \label{tbl:5}
        \begin{tabular}{| c | c | c | c | c | c |}
            \toprule
                Case
                & \multicolumn{1}{p{2cm} |}{\centering \% same sign weight pairs (SMS-2)}
                & \multicolumn{1}{p{2cm} |}{\centering mean a (SMS-2)}
                & \multicolumn{1}{p{2cm} |}{\centering std deviation of a (SMS-2)}
                & \multicolumn{1}{p{2cm} |}{\centering mean p (GAS-2)}
                & \multicolumn{1}{p{2cm} |}{\centering std deviation of p (GAS-2)} \\
            \midrule
            1 & 0.08317757 & 0.499967 & 0.0301019 & 0.5      & 1.29e-10 \\
            2 & 0.16607342 & 0.499813 & 0.0541569 & 0.5      & 9.47e-10 \\
            3 & 0.31535203 & 0.500172 & 0.0454918 & 0.5      & 3.94e-10 \\
            4 & 0.20896187 & 0.500246 & 0.0662073 & 0.5      & 1.57E-10 \\
            5 & 3.53641975 & 0.494368 & 0.126159  & 0.500123 & 0.00146554 \\
            6 & 1.62553191 & 0.498273 & 0.0982273 & 0.500143 & 0.00263955 \\
            7 & 0.20697674 & 0.502388 & 0.0626808 & 0.500132 & 0.00128501 \\
            8 & 1.28317757 & 0.499478 & 0.109218  & 0.50003  & 0.00332339 \\
            9 & 1.05094340 & 0.49776  & 0.104481  & 0.499999 & 0.00241329 \\
            \bottomrule
        \end{tabular}
    \end{center}
\end{table*}

In Table \ref{tbl:5}, means and standard deviations are reported for the factor $a$ as well.
A small percentage of symmetric pairs of points on the same first transition domains do in fact have the same sign and produce positive products for SMS-2
as shown in Table \ref{tbl:5}.
It can additionally be seen that in cases 5-9 the standard deviation of $a$ appears increased and these are exactly the cases where
LDE are applied.
In cases 1-4 where SMS-2 is competitive, either the percentage of positive correlation products is low, or the standard deviation of $a$ is low, or both.
A full distribution of the weight factors $a$ across all transition domains for each test case is shown in Fig. \ref{fig:a_stats}
and confirms the more pronounced spread in cases 5-9.

\begin{figure*}[h!]
    \includegraphics[width=.25\textwidth]{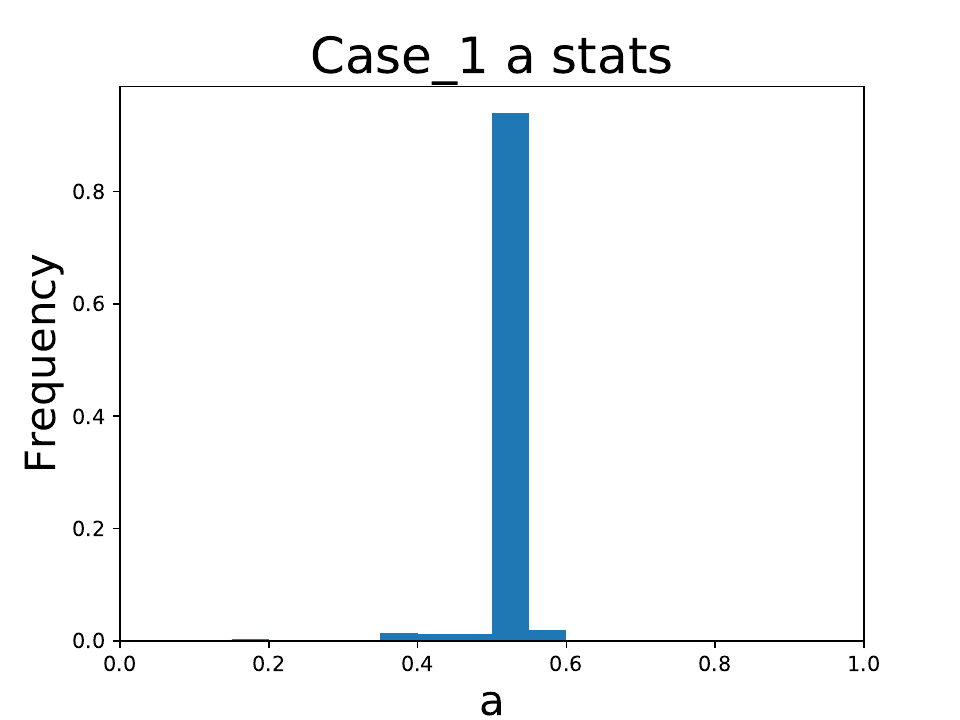}\hfill
    \includegraphics[width=.25\textwidth]{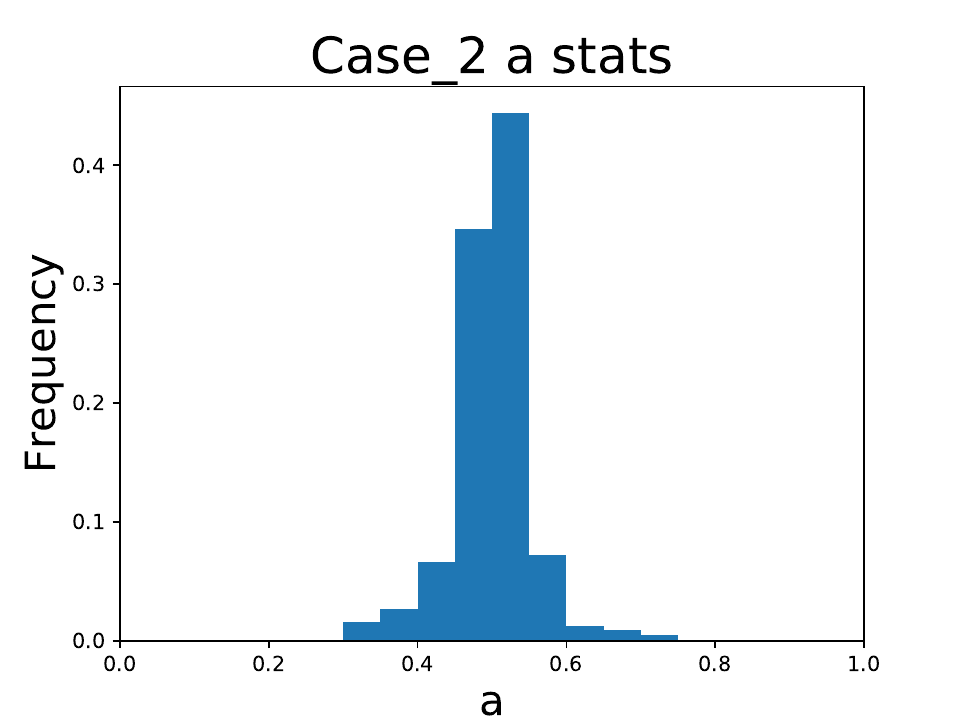}\hfill
    \includegraphics[width=.25\textwidth]{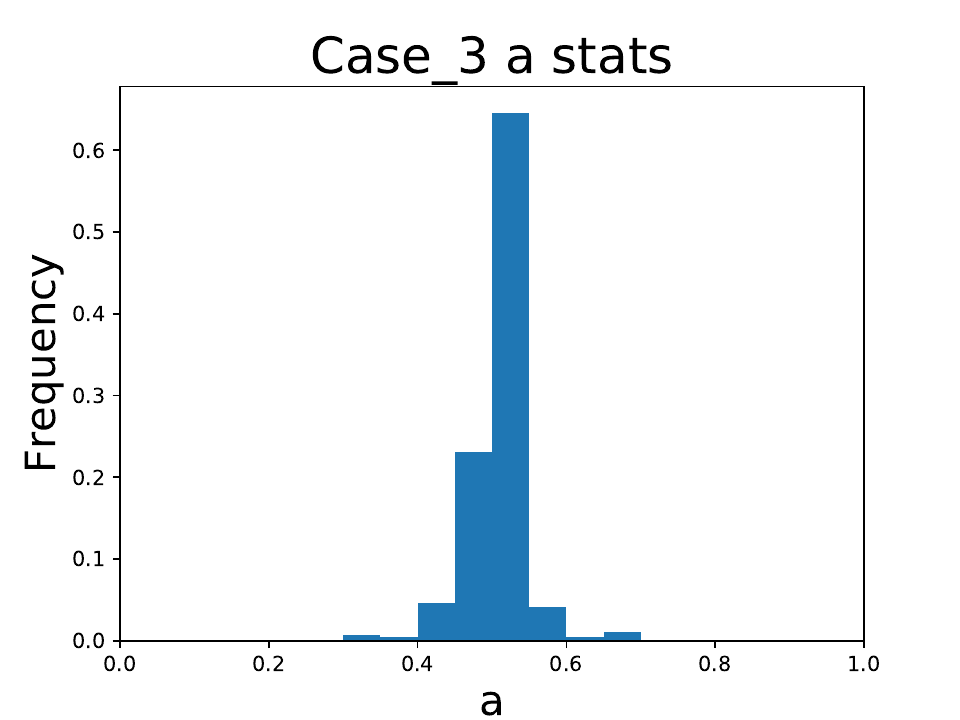}\hfill
    \includegraphics[width=.25\textwidth]{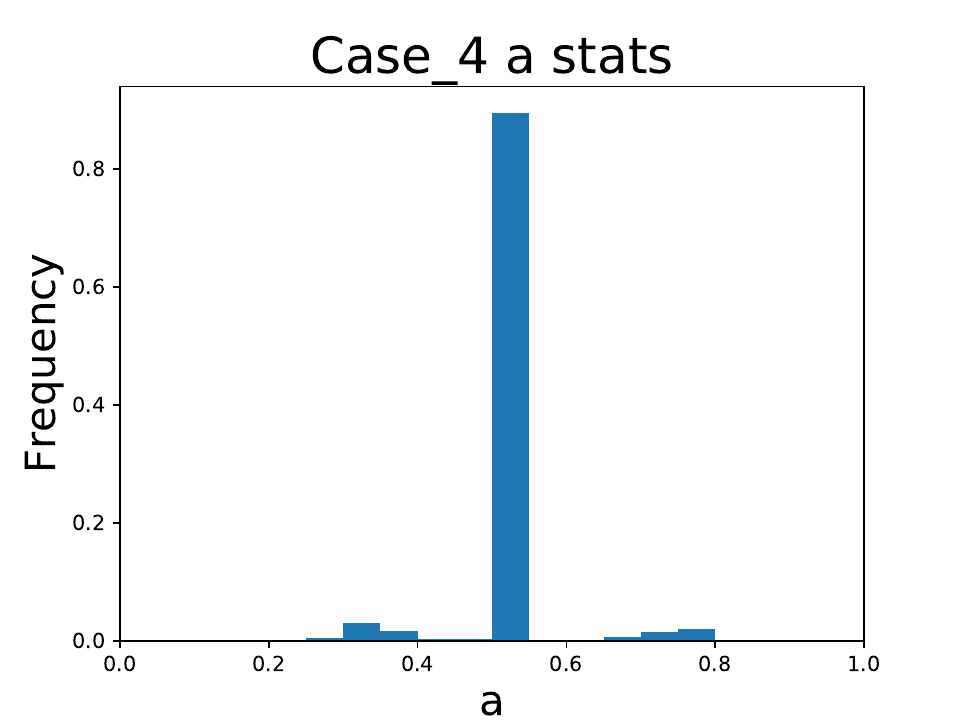}\hfill
    \includegraphics[width=.2\textwidth]{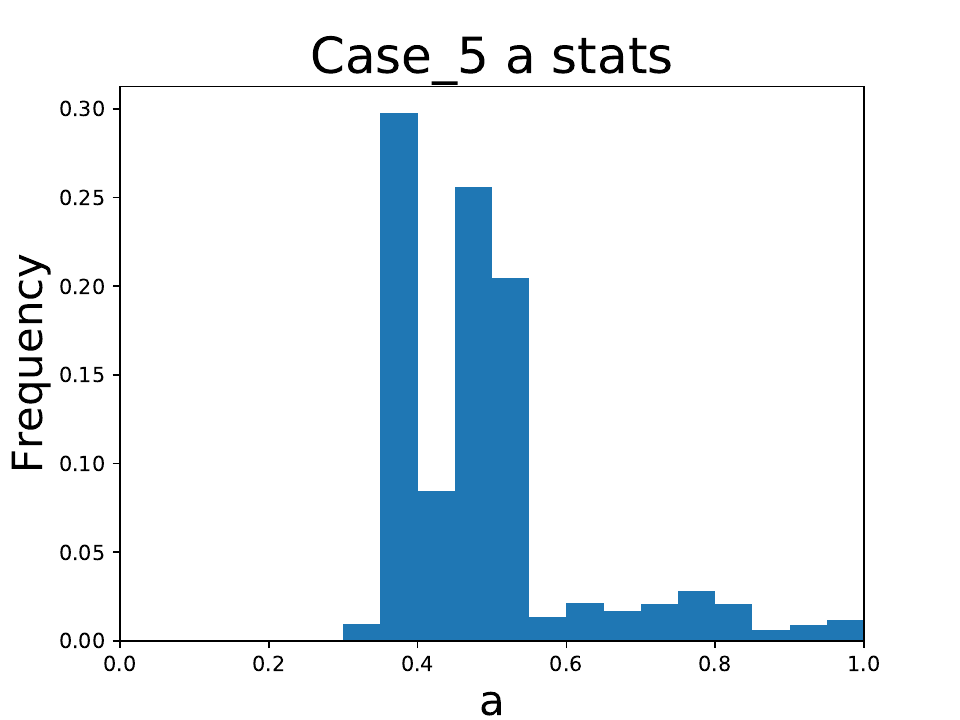}\hfill
    \includegraphics[width=.2\textwidth]{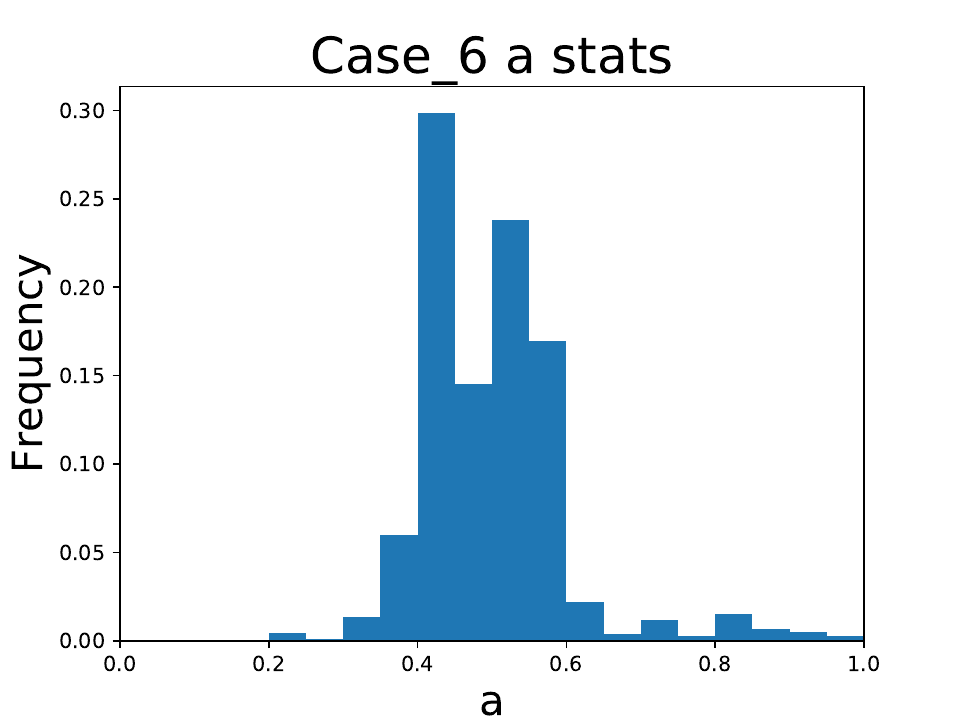}\hfill
    \includegraphics[width=.2\textwidth]{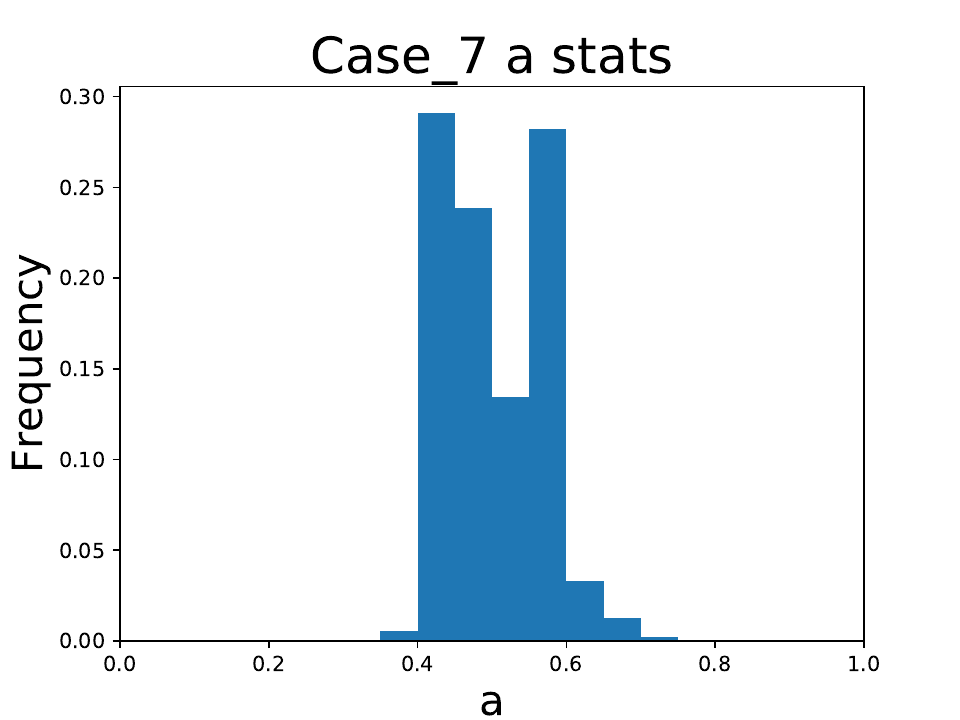}\hfill
    \includegraphics[width=.2\textwidth]{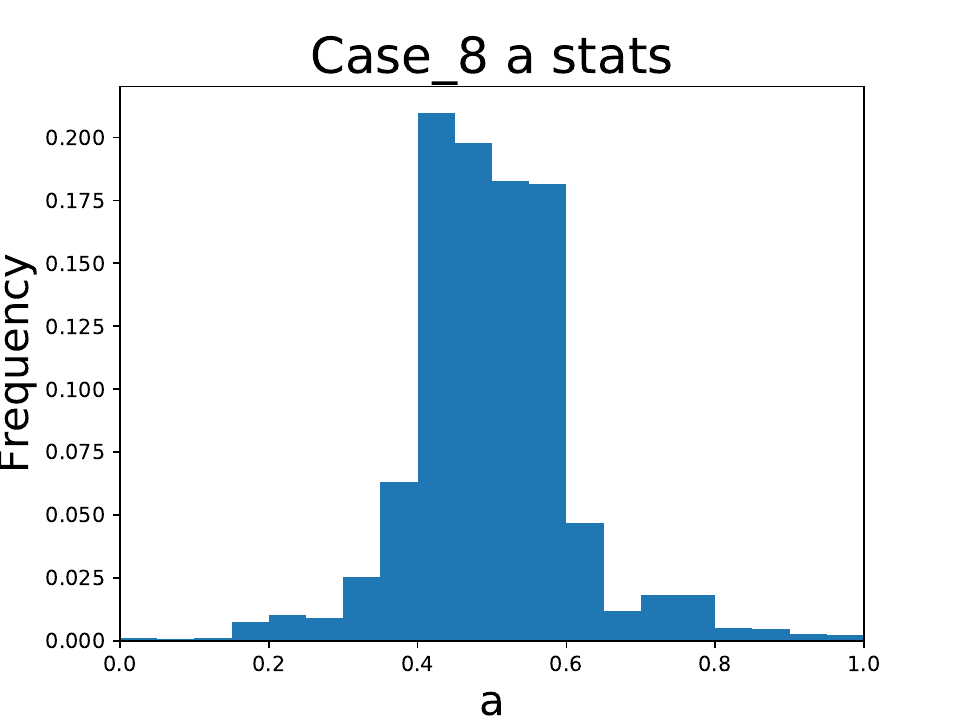}\hfill
    \includegraphics[width=.2\textwidth]{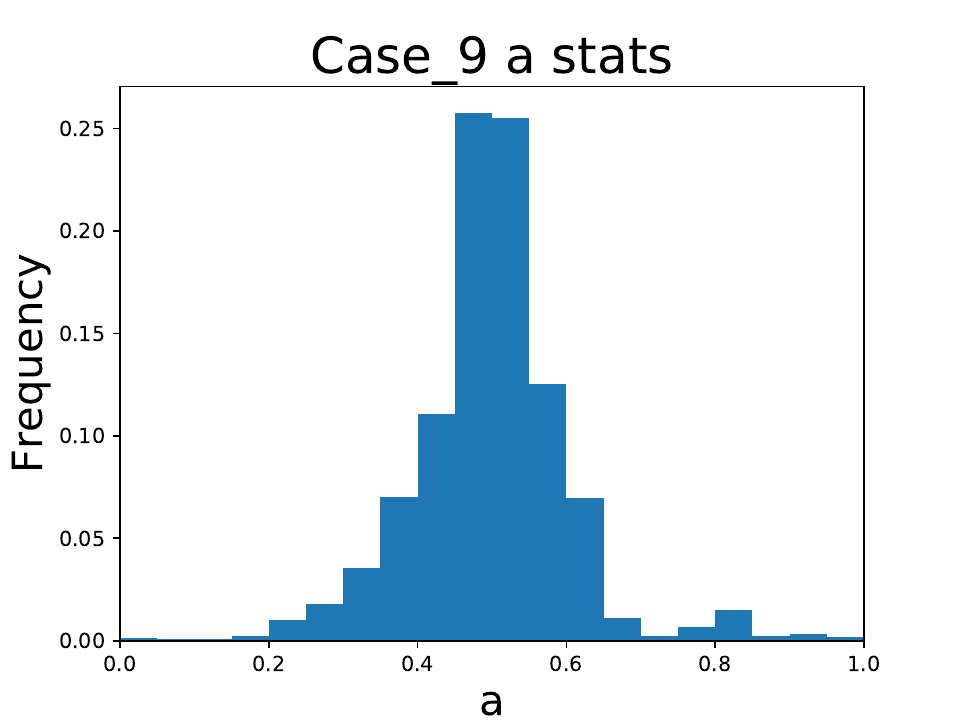}\hfill
    \caption{Histogram of the factor a of equation (\ref{eq:13}) for all test cases}
    \label{fig:a_stats}
\end{figure*}

\begin{figure*}[h!]
    \subcaptionbox{Same sign weight values}{\includegraphics[width=.5\textwidth]{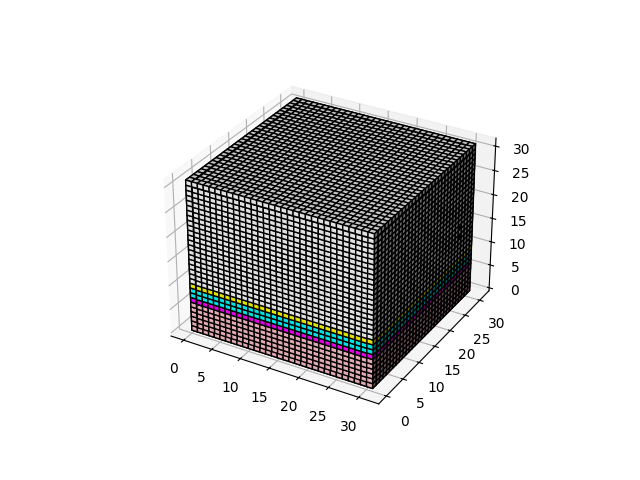}\hfill \includegraphics[width=.4\textwidth]{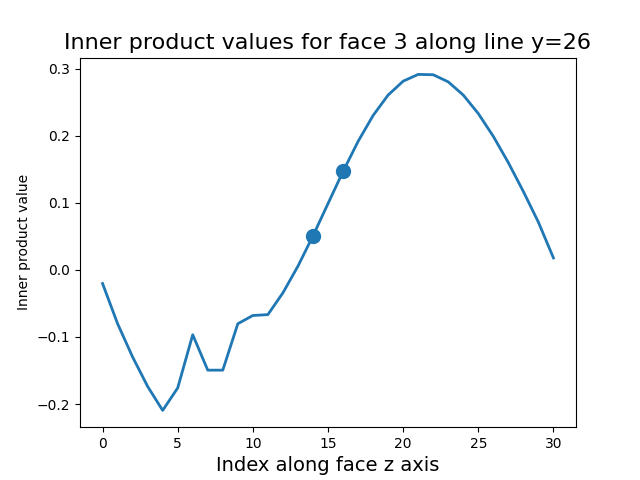}\hfill}
    \subcaptionbox{Factor "a" > 0.95}{\includegraphics[width=.5\textwidth]{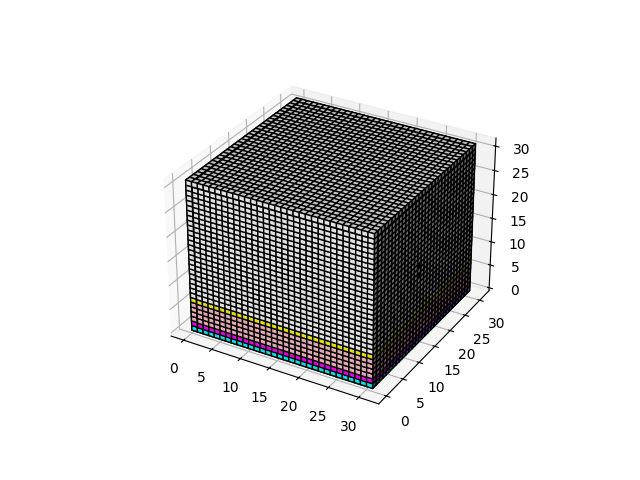}\hfill \includegraphics[width=.4\textwidth]{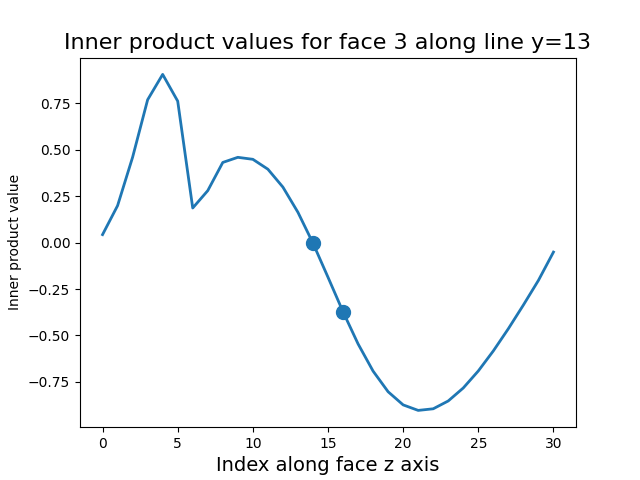}\hfill}
    \caption{First transition domains containing layered dielectrics with problematic samples selected by SMS-2 marked as black.
             Values of $\nabla_\mathbf{r} P(\mathbf{r},\mathbf{r_1})$ along a line parallel to the z-axis shown on the right.}
    \label{fig:concrete_sms_examples}
\end{figure*}

Concrete examples of first transition domains where the above factors come into play are shown in Figure \ref{fig:concrete_sms_examples}, taken directly from Case 8 of the experiments.
Both these examples profess only stratified dielectrics within the transition domain,
with 5 distinct permittivity layers: the values are ranging from 2.3 to 6 in the top, "same sign", case and from 2.3 to 9 in the bottom, "deviant factor" case.
The Gaussian surface is a plane parallel to the x-y plane and passing from the center of the domain while the normal is aligned towards the positive and negative z-axis respectively.
The values of $\nabla_\mathbf{r} P(\mathbf{r},\mathbf{r_1})$ that dictate the weight value signs and $a$ factor values are shown on the right.
It can be seen that the zero crossing of the inner product value along the z axis does not coincide with the plane of spatial symmetry and this creates obvious problems for SMS:
in the top case, both weight values have the same sign and in the bottom case, one inner product is very close to zero while the other is quite larger in magnitude.

To summarize, the SMS schemes, because they rely only on spatial positions and not actual weight values,
could be affected by factors to which the GAS schemes are by construction agnostic, without however compromising efficiency.

\section{Conclusion}
We have presented a novel variance reduction approach for Floating Random Walk-based capacitance extraction.
The approach builds upon the important idea of shooting multiple walks from each first transition domain, which was recently proposed in the literature and improves upon it
by making the selection of points to continue walks from largely data-driven instead of solely geometric.
Selection is guided by the sign of the inner product of the gradient of the surface Green's function and the normal to the Gaussian surface
and pairs of points with opposite such signs are obtained by repeatedly sampling from the same pdf through a Markov chain algorithm.

The approach has been shown to reap substantial performance benefits on top of what the currently best variance reduction approaches can afford,
especially in scenarios where non-stratified dielectrics make pure geometric point selection less suitable.
A thorough investigation of the tradeoff between taking more antithetic pairs on the same first transition domain and
forming more first transition domains in the first place remains as future work,
in order to possibly decide beforehand what the best number of antithetic pairs per transition domain is for any given design.

\section{Acknowledgments}
We would like to thank the anonymous referees both of the current and of previous versions of this manuscript.
Initial comments greatly improved the rigor of our mathematical exposition, while subsequent scrutiny uncovered further mistakes.
Qualitative comments were insightful and enabled us to delve deeper into the subject matter.

%%
%% The next two lines define the bibliography style to be used, and
%% the bibliography file.
\bibliographystyle{ACM-Reference-Format}
\bibliography{TODAES}

%%
%% If your work has an appendix, this is the place to put it.
\appendix

\section{Markov Chain Analysis}

\begin{figure}[h!]
    \centering
    \usetikzlibrary{automata}
\usetikzlibrary{positioning}
\usetikzlibrary{arrows}

\begin{tikzpicture} [>=stealth', auto, prob/.style={inner sep=1pt, font=\footnotesize}]
    \node (s) [state] {S};
    \node (mp) [state, above right=1em and 4em of s] {M+};
    \node (mm) [state, below right=1em and 4em of s] {M-};
    \node (e) [state, below right=1em and 4em of mp] {T};

    \path[->] (s) edge[bend left=20] node[prob, above=0.3em]{$p$} (mp);
    \path[->] (s) edge[bend right=20] node[prob, below=0.3em]{$1 - p$} (mm);
    \path[->] (mm) edge[bend right=20] node[prob, below=0.3em]{$p$} (e);
    \path[->] (mp) edge[bend left=20] node[prob, above=0.3em]{$1 - p$} (e);

    \path[->] (mp) edge[loop above] node[prob]{$p$} (mp); \path[->] (mm) edge[loop below] node[prob]{$1 - p$} (mm);

\end{tikzpicture}
    \caption{Markov chain for selecting antithetic points (Algorithm \ref{alg:1}). Here $p$ is the probability that the weight value corresponding to a point on the surface of a given first transition domain is positive,
        $\mathbf{S}$ is the start state, $\mathbf{M_{+},M_{-}}$ are the states after first sampling a point that corresponds to a positive or negative weight value respectively,
        and $\mathbf{T}$ is the terminal state of the algorithm.}
    \label{fig:markov-chain}
\end{figure}

The Markov chain for Algorithm \ref{alg:1} is shown in Fig. \ref{fig:markov-chain},
where $p$ is the probability of obtaining a negative weight value from a given first transition domain.
It is easy to see that it is in fact an absorbing Markov chain with 3 transient states $\mathbf{S,M_{+},M_{-}}$ and a single absorbing state $\mathbf{T}$.
As a sidenote, the Markov chain for vanilla IS+SS, would be one without state $\mathbf{T}$ at all, terminating in either $\mathbf{M_{+}}$ or $\mathbf{M_{-}}$, whichever came first,
and characterized by the same exact $p$.
The transition matrix with the states ordered as above is:

\begin{gather}
    \begin{bmatrix}
    \mathbf{Q} & \mathbf{R} \\
    \mathbf{0} & \mathbf{I}
    \end{bmatrix} =
    \begin{bmatrix}
    0 & p & 1-p &  0 \\
    0 & p &  0  & 1-p \\
    0 & 0 & 1-p &  p \\
    0 & 0 &  0  &  1
    \end{bmatrix}
\end{gather}

\noindent From the theory of absorbing Markov chains \cite{Grinstead2003}, it is known that the average number of state transitions to absorption is given by:

\begin{gather}
    t = (\mathbf{I} - \mathbf{Q})^{-1} \cdot \mathbf{1} =
    \begin{bmatrix}
    1 & -p  & p-1 \\
    0 & 1-p &  0  \\
    0 &  0  &  p  \\
    \end{bmatrix}^{-1} \cdot \mathbf{1} =
    \begin{bmatrix}
    1 & \frac{p}{1-p} & \frac{1-p}{p} \\
    0 & \frac{1}{1-p} &      0        \\
    0 &       0       &  \frac{1}{p}   \\
    \end{bmatrix} \cdot \mathbf{1} =
    \begin{bmatrix}
    1 + \frac{p}{1-p} + \frac{1-p}{p} \\
    \frac{1}{1-p} \\
    \frac{1}{p}
    \end{bmatrix}
\end{gather}

\noindent where $\mathbf{1}$ is the vector of all ones.
The first row of the resultant vector represents the average number of transitions required to reach the absorbing state $T$ starting from $S$.
It is easy to see that the minimum of this function is attained at $p$ = 0.5, and that it is undesirable for $p$ to be very close to 0 or 1.
If $p$, and therefore $1-p$, are close to 0.5, on average 3 transitions will be required for the chain to terminate.
Lemma \ref{prop:l1} has already established that $p$ is actually 0.5 for any transition domain.

\section{Unbiasedness}
The proof of unbiasedness builds upon a useful formalism, introduced in \cite{Huang2024}, appendix A, namely that the destination electric potential of
a single random walk at each transition domain is a discrete-time martingale.
Considering the antithetic floating random walk with weight $w_i$ and destination potential $\phi_i$ as per (\ref{eq:7}),
we can enumerate the exact spatial locations it passes through as $\mathbf{r},\mathbf{r_1^{(i)}}, \mathbf{r_2^{(i)}}, ..., \mathbf{r_T^{(i)}}$ where $\phi(\mathbf{r_T^{(i)}})=\phi_i$
and similarly for its "sibling" walk with weight $\tilde{w}_i$ and destination potential $\tilde{\phi}_i$.
A result proved in \cite{Huang2024}, due to the martingale property, is that $E[\phi_{i}]=E[\phi(\mathbf{r_T^{(i)}})]=\phi(\mathbf{r_1^{(i)}})$
and also analogously that $E[\tilde{\phi}_{j}]=\phi(\mathbf{\tilde{r}_1^{(j)}})$.
Given this and taking the total expectation of (\ref{eq:8}), conditional on $\mathbf{r_1}$ and the martingales, we have:

\begin{align*}
    E[\hat{\mu}_{GAS}] & = F \frac{1}{n}(\sum\limits_{i=1}^{n}E[w_i E[\phi_{i}]] + \sum\limits_{j=1}^{n}E[\tilde{w}_j E[\tilde{\phi}_{j}]]) \\
                       & = F \frac{1}{n}(\sum\limits_{i=1}^{n}E[w_i \phi(\mathbf{r_1^{(i)}})] + \sum\limits_{j=1}^{n}E[\tilde{w}_j \phi(\mathbf{\tilde{r}_1^{(j)}})]) && \text{(martingale result)} \\
                       & = F \frac{1}{n}\sum\limits_{i=1}^{n}2E[w_i \phi(\mathbf{r_1^{(i)}})] && \text{(Lemma \ref{prop:l2} and equation (\ref{eq:theory}))} \\
                       & = 2F \cdot E[w\phi(\mathbf{r_1})] \\
                       & = F \oint_{S_1} 2\frac{K}{2L} q(\mathbf{r},\mathbf{r_1}) \phi(\mathbf{r_1}) d\mathbf{r_1} && \text{(equation (\ref{eq:7}))} \\
                       & = \oint_{S_1} F w(\mathbf{r},\mathbf{r_1}) q(\mathbf{r},\mathbf{r_1}) \phi(\mathbf{r_1}) d\mathbf{r_1} \numberthis
\end{align*}

\noindent To clarify the third step further, note that, e.g., $w_5$ vs $\tilde{w}_5$, and $\phi(\mathbf{r_1^{(5)}})$ vs $\phi(\mathbf{\tilde{r}_1^{(5)}})$,
come from the exact same first transition domain (the fifth one formed) and have the exact same marginal distributions from Lemma \ref{prop:l2}.
The last equation is exactly the inner integral of (\ref{eq:4}) and unbiasedness is now evident from the law of total expectation,
conditional now on $\mathbf{r}$ and $\mathbf{r_1}$,
i.e., the outer integral on the Gaussian surface G in (\ref{eq:4}), exactly as in \cite{Huang2024}, appendix B.

\section{Variance Reduction}

From (\ref{eq:10}), substituting the non-zero target potentials $\phi_{i}\tilde{\phi}_{i}=1$
and $\tilde{w_i} = -w_i$, covariance can be analyzed as follows:

\begin{equation}
    Cov_{GAS} = -\frac{1}{n^2}(\sum_{k=1}^{n^\prime}(E[w_k^2] - E[w_k]^2))
\end{equation}

\noindent where $n^\prime$ is the number of non-zero target potentials.
Let us examine what happens if we assume an alternative correlation-based estimator,
where surviving weight pairs are also $n^\prime$ in number and profess matching magnitudes,
and where the weights within a single pair do \textit{not} profess opposite signs.
Without loss of generality, we can assume this happens for the first weight pair, i.e.,
that $\tilde{w_1} = w_1$.
Then:

\begin{align*}
    Cov_{ALT} &= \frac{1}{n^2}(E[w_1^2] - E[w_1]^2) -\frac{1}{n^2}(\sum_{k=2}^{n^\prime}(E[w_k^2] - E[w_k]^2)) \\
              &= \frac{Var[w_1]}{n^2} -\frac{1}{n^2}(\sum_{k=2}^{n^\prime}(E[w_k^2] - E[w_k]^2)) \\
              &= \frac{2Var[w_1]}{n^2} - \frac{Var[w_1]}{n^2} - \frac{1}{n^2}(\sum_{k=2}^{n^\prime}(E[w_k^2] - E[w_k]^2)) \\
              &= \frac{2Var[w_1]}{n^2} -\frac{1}{n^2}(\sum_{k=1}^{n^\prime}(E[w_k^2] - E[w_k]^2)) \\
              &= \frac{2Var[w_1]}{n^2} + Cov_{GAS} > Cov_{GAS} \numberthis
\end{align*}

\noindent since $Var[\cdot]$ is always non-negative.
This can be extended through induction for any number of weight pairs
where the opposite sign condition is violated. Hence, in expectation,
GAS achieves maximally negative correlation among correlation-based variance reduction schemes
over any set of $n^\prime$ weight pairs with matching magnitudes.
Of course this does not, in principle, preclude a variance reduction scheme from
consistently producing $n^{\prime\prime} > n^\prime$ correlated weight pairs or weight pairs
with more favorable magnitudes for negative correlation,
however the strongly stochastic nature of the FRW algorithm
renders this an open problem and GAS a good practical option.

The above holds in expectation and a valid question could be:
Is there a chance that the actual Monte Carlo covariance from (\ref{eq:11}) turns out to be positive?
Because then the total Monte Carlo variance from (\ref{eq:9}) would increase and affect algorithm convergence.
It turns out that Monte Carlo covariance for GAS-2 is always negative and therefore always reduces Monte Carlo variance:

\begin{equation}
    \begin{split}
        \Delta &= \frac{1}{n^2n^\prime}(\sum\limits_{i=1}^{n^\prime}n^{\prime}w_{i}\tilde{w}_{i} - \sum\limits_{i=1}^{n^\prime}\sum\limits_{j=1}^{n^\prime}w_{i}\tilde{w}_{j}) \\
               &= \frac{1}{n^2n^\prime}\sum\limits_{i=1}^{n^\prime}(n^{\prime}w_{i}\tilde{w}_{i} - \sum\limits_{j=1}^{n^\prime}w_{i}\tilde{w}_{j}) \\
               &= \frac{1}{n^2n^\prime}\sum\limits_{i=1}^{n^\prime}[(n^{\prime}-1)w_{i}\tilde{w}_{i} - \sum\limits_{j\neq{i}}^{n^\prime}w_{i}\tilde{w}_{j}] \\
               &= -\frac{1}{n^2n^\prime}\sum\limits_{i=1}^{n^\prime}[(n^{\prime}-1)w_{i}^2 - \sum\limits_{j\neq{i}}^{n^\prime}w_{i}{w}_{j}] \\
               &= -\frac{1}{n^2n^\prime}[\sum\limits_{(i,j): i \leq n^\prime,j>i}(w_{i}^2 + w_{j}^2) - \sum\limits_{(i,j): i \leq n^\prime,j>i}2w_{i}{w}_{j}] \\
               &= -\frac{1}{n^2n^\prime}\sum\limits_{(i,j): i \leq n^\prime,j>i}(w_{i}^2 + w_{j}^2 - 2w_{i}{w}_{j}) \\
               &= -\frac{1}{n^2n^\prime}\sum\limits_{(i,j): i \leq n^\prime,j>i}(w_{i} - w_{j})^2 \leq 0
    \end{split}
\end{equation}

\noindent Note that intermediate results in the proof are easy to derive by expanding and using induction on $n$, e.g., for $n=3$:

\begin{equation}
    \sum\limits_{(i,j): i \leq 3,j>i}(w_{i}^2 + w_{j}^2) = (w_{1}^{2}+w_{2}^{2}) + (w_{1}^{2}+w_{3}^{2}) + (w_{2}^{2}+w_{3}^{2})
\end{equation}

To prove Theorem \ref{prop:t1}, notice that in GAS-2,
sample variance and covariance are calculated every $2n$ weight values, $n$ positive and $n$ negative,
obtained from $\mathit{n}$ first transition domains.
Assuming vanilla IS+SS has assembled approximately the same number of positive and negative weight values,
from $2n$ first transition domains, i.e., $n$ positive and $n$ negative, (\ref{eq:9})
can exactly represent the IS+SS estimator variance:

\begin{align*}
    Var[X + \tilde{X}]_{IS+SS}&= Var[X] + Var[\tilde{X}] \\
    Var[X + \tilde{X}]_{GAS}    &= Var[X + \tilde{X}]_{IS+SS} + 2\Delta, \Delta \leq 0
\end{align*}

\noindent Hence, sample variance reduction is achieved with GAS-2 in the course of an extraction compared to IS+SS.
The only remaining assumption to complete our result is that of approximate equality of positive and negative weight values.
This follows directly from Lemma \ref{prop:l1}.

\end{document}